%% file: main.tex
\documentclass[conference]{IEEEtran}
\input{mydefs}

\begin{document}
\date{}
\title{\Large \bf \sysname: Knowledge Unlearning by Deviating Representation \\ for Large Language Models}
\author{
    Ce Fang\textsuperscript{1},
    Zhikun Zhang\textsuperscript{1},
    Min Chen\textsuperscript{2},
    Qing Liu\textsuperscript{1},
    Lu Zhou\textsuperscript{3},
    Zhe Liu\textsuperscript{1\S},
    Yunjun Gao\textsuperscript{1\S}
    \\[4pt]
    \textsuperscript{1}{Zhejiang University}
    \quad
    \textsuperscript{2}{Vrije Universiteit Amsterdam}
    \\
    \textsuperscript{3}{Nanjing University of Aeronautics and Astronautics}

    \\
    \{fangce, zhikun, qingliucs, zhe.liu, gaoyj\}@zju.edu.cn, m.chen2@vu.nl, lu.zhou@nuaa.edu.cn
}
\date{}

\pagestyle{plain}

\maketitle

\def\thefootnote{\#}\footnotetext{Corresponding authors.}
\renewcommand*{\thefootnote}{\arabic{footnote}}


\begin{abstract}
\textit{Large language models} (LLMs) acquire a large amount of knowledge via pre-training on vast corpora.
While this endows LLMs with strong capabilities in generation and reasoning, it amplifies risks associated with sensitive, copyrighted, or harmful content in training data.
LLM unlearning, which aims to remove specific knowledge encoded within models, is a promising technique to reduce these risks.
However, existing methods often force LLMs to generate random or incoherent answers due to their inability to alter the encoded knowledge precisely.
To achieve effective unlearning at the knowledge level, we propose \textit{\textbf{K}nowledge \textbf{U}nlearning by \textbf{D}eviating represent\textbf{A}tion} (\sysname), which synergizes parameter selection with knowledge removal based on the property of \textit{knowledge storage}.
We first utilize a component-level method and a sliding window to search specific layers for knowledge storage.
Then, we design a new unlearning objective that induces the model's representations to deviate from their original positions for knowledge removal, thus weakening the ability to associate with the target knowledge.
To resolve the optimization conflicts between forgetting and retention, we employ a relaxed null-space projection to mitigate the disruption to retaining knowledge.
Extensive experiments on representative benchmarks, WMDP and MUSE, demonstrate that \sysname outperforms most existing baselines by effectively balancing knowledge removal and model utility retention. 
Source code is available at \url{https://github.com/AceNagi/KUDA}.
\end{abstract}

\maketitle

\section{Introduction}
\label{sec:intro}

\textit{Large language models} (LLMs) have demonstrated great potential across various domains, such as scientific research, education, and economics.
These capabilities stem from the process of pre-training, during which LLMs learn and store extensive knowledge from diverse corpora~\cite{li2024wmdp}. 
However, the training corpora are primarily collected from the Internet, which may contain sensitive personal information, dangerous knowledge, or copyrighted content~\cite{DZCSJCCBZ25,trustllm,shi2024large,WZZTW25}. 
This raises societal safety concerns, as there is growing awareness that LLMs have a potential risk of being misused~\cite{du2025auditsok,faceaudit,scalable-carlini, YZDCSGHC26}.
Alignment training~\cite{ouyang2022training} represents the mainstream solution for safety risks. 
However, the phenomenon of shallow alignment~\cite{Qi2025shallow} renders such approaches inherently vulnerable to adversarial attacks. 
To achieve thorough safety, \textit{LLM unlearning} has emerged as a promising technique by removing the undesired knowledge stored in LLMs' parameters~\cite{liu2025rethinking,ren2025sokmachineunlearninglarge}.

\mypara{Existing Solutions}
The mainstream unlearning methods for LLMs are to delete or suppress target knowledge that is encoded in LLMs through updating parameters~\cite{ren2025sokmachineunlearninglarge, liu2025rethinking}.
The \textit{full parameter updates} methods update all parameters which leads LLMs to generate uncontrolled responses when queries are related to sensitive personal information or dangerous knowledge, which we refer to as \textit{target knowledge}~\cite{jang-etal-2023-knowledge, zhangnegative, fan2024simplicity}.
However, the indiscriminate updates risk severe degradations of model utility~\cite{ren2025sokmachineunlearninglarge}.
Furthermore, the response-oriented methods merely achieve response-level mitigation without inherently removing the target knowledge encoded in the model's parameters~\cite{hong-etal-2024-dissecting, ren2025sokmachineunlearninglarge, zhang2025focusonkey}.
This renders the removed knowledge easily restored with specific prompts~\cite{zhang2025focusonkey}.

In comparison, the \textit{partial parameter updates} approaches only update a portion of parameters to achieve more precise and targeted knowledge removal.
Nevertheless, they design the parameter selection and removal algorithm independently, failing to tradeoff the target knowledge removal and model utility retention~\cite{fan2025llmunlearningresilientrelearning,yang2025faithun}.
Concretely, these methods either employ well-designed localization approaches to identify knowledge-relevant parameters and directly apply loss functions from full parameter updates~\cite{WALGE, wu-etal-2023-depn}, or introduce fine-grained removal algorithms yet rely on empirical search for parameter selection~\cite{li2024wmdp}.
Therefore, there is a pressing need for a fine-grained, holistic approach that synergistically couples parameter selection and knowledge removal to balance the target knowledge removal and model utility retention.

\mypara{Our Proposal}
To this end, we propose \textit{\textbf{K}nowledge \textbf{U}nlearning by \textbf{D}eviating represent\textbf{A}tion} (\sysname), which synergizes parameter selection with knowledge removal based on the \textit{knowledge storage} property of LLMs.
Specifically, the \textit{feed-forward neural network} (FFN) components at specific layers are responsible for storing knowledge learned from training corpora~\cite{geva2021transformer}, whereas the \textit{multihead self-attention} (MHSA) components at some layers primarily facilitate semantic aggregation~\cite{geva2023dissecting}.
Therefore, by applying updates exclusively to FFNs specialized for knowledge storage, while preserving MHSA components intact, we can remove target knowledge encoded in parameters, with minimal degradation to the LLMs' general capabilities.
However, related works merely confine updates to FFNs' parameters~\cite{li2024wmdp, wu-etal-2023-depn}, lacking a synergy pipeline that integrates parameter selection and knowledge removal.

To bridge this gap, \sysname establishes a partial parameter update unlearning method operating in the representation space.
\sysname begins with a component-level identification method, consisting of causal tracing~\cite{meng2022locating} and the metric causal effect (CE), to pinpoint layers whose FFNs exhibit a significant contribution to knowledge storage. 
Subsequently, a sliding window search strategy is employed to determine unlearning layers, whose a subset of parameters is then designated for unlearning updates.
Building upon the selected layers, \sysname implements knowledge removal by intervening in the corresponding \textit{representations (\ie, the activation outputs of models' intermediate layers)}.
Since knowledge stored in FFNs is activated and incorporated into representations during generation~\cite{elazar-etal-2021-amnesic}, we design a knowledge removal loss function, which induces deviation to push the representation of target knowledge away from its original state.
This disrupts the model’s ability to utilize the knowledge for generation.
Simultaneously, a constraint loss is introduced to prevent excessive deviations.
Besides, we depart from the conventional use of a scaling factor to balance knowledge removal and utility retention, as it fails to reconcile the directional conflicts inherent between the two objectives.
Instead, we exploit the invariance property of the null-space, and introduce a relaxation threshold for knowledge entanglement.
By projecting the unlearning gradients into a relaxed null-space, \sysname eliminates gradient components detrimental to model utility, thereby removing target knowledge while preserving the model's capabilities.
Moreover, we propose a two-stage tuning strategy for the hyperparameter configuration by transforming the joint optimization into two decoupled linear searches, 
thus enabling \sysname to adapt more effectively to different scenario settings without the high cost of grid search.

\mypara{Evaluation}
We conduct experiments on two representative benchmarks to illustrate the superiority of \sysname.
Besides, we introduce a harmonic-mean-based metric, \textit{knowledge removal degree} (KRD), to capture the comprehensive forgetting effect from multiple perspectives, while maintaining compatibility with different benchmarks.
First, we showed that \sysname achieves a good trade-off between knowledge removal and utility retention. 
For instance, while achieving near-complete removal of target knowledge, \sysname incurs marginal utility degradation of only $3.1\%$ on MUSE and $1.7\%$ on WMDP.
In addition to unlearning performance, \sysname demonstrates robust model generalizability, seamlessly extending its unlearning capabilities to modern models like Llama-3.1 and Qwen-3.
From the perspective of gradient dynamics, we further reveal that \sysname balances removal and retention by maintaining near-orthogonal gradient directions between the two objectives, thereby preventing significant conflicts.
Moreover, we reveal the coupled impact of two influential hyperparameters, \ie, the inverse unlearning temperature and the null-space relaxation threshold, thus necessitating our decoupled search strategy.

\mypara{Contributions}
Our contributions are three-folds: 
\begin{itemize}
    
    \item Grounded in the LLMs' knowledge storage property, we propose a representation-based unlearning method that integrates \textit{unlearning layer selection}, \textit{representation deviation}, and \textit{relaxed null-space projection}.
    By applying stable updates to parameters correlated with knowledge storage, \sysname achieves targeted knowledge removal while retaining the LLMs' utility effectively.

    \item By identifying the stability boundary, we introduce a two-stage tuning strategy that simplifies the joint hyperparameter optimization into two decoupled linear searches. 
    This facilitates the practical deployment across different settings. 
    
    \item We conduct extensive experiments to demonstrate the unlearning performance of \sysname. 
    Furthermore, we provide mechanistic insights into gradient dynamics, revealing that \sysname mitigates unlearning conflicts by maintaining near-orthogonal gradients.
\end{itemize}


\section{Preliminary}
\label{sec:Preliminaries}

\begin{figure}[!t]
    \centering
    \begin{subfigure}{0.49\columnwidth}
    \includegraphics[width=\textwidth]{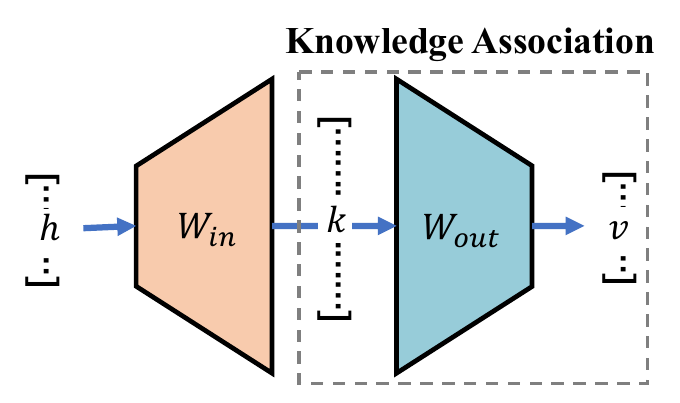}
    \caption{Classic Transformer}
    \label{fig:general ffn}
    \end{subfigure}
    \begin{subfigure}{0.49\columnwidth}
    \includegraphics[width=\textwidth]{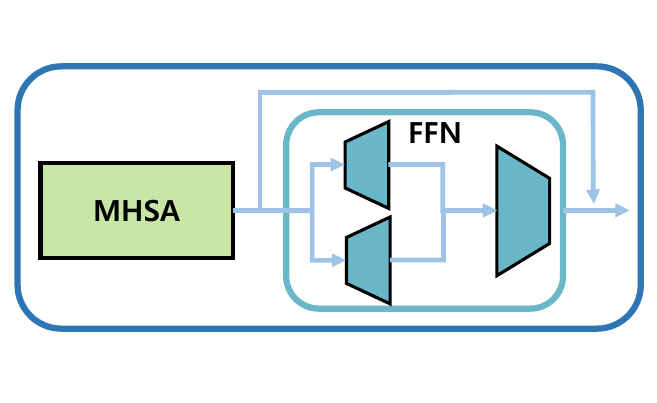}
    \caption{Llama series}
    \label{fig:llama ffn}
    \end{subfigure} \\ 
    \caption{FFN architectures in Transformer-based models. 
    }
    \label{fig:ffn structure}
\end{figure}

\subsection{Large Language Models}
\label{subsec: Large Language Model}
Transformer-based LLMs achieve context understanding and text generation with two crucial mechanisms: FFN-based associative memory and MHSA-driven contextual interaction~\cite{geva2021transformer}. 
Concretely, FFN acts as a 
knowledge storage repository~\cite{meng2022locating, geva2021transformer}, whereas MHSA facilitates the semantics aggregation~\cite{geva2023dissecting}.

\mypara{FFN for Knowledge Storage}
FFN can be modeled as \textit{a two-layer key–value knowledge storage}~\cite{meng2022locating}, as illustrated in \autoref{fig:general ffn}.
Given an input, the first-layer $W_{in}$ generates a \textit{key} based on the input; the second-layer $W_{out}$ maps the key to the stored knowledge related to that input and generates a corresponding \textit{value}.
This knowledge-retrieval process incorporates rich knowledge into hidden \textit{representations}, namely the activation outputs of models' intermediate layers, to influence the final text output, which can be formally formulated as:
\begin{equation}
\begin{aligned}
k^l_i &= \sigma(W^l_{in}(h^{l-1}_i)), \nonumber \\
v^l_i &= W^l_{out} \cdot k^l_i, \nonumber
\end{aligned}
\end{equation}
where $l$ and $i$ represent the layer index and the token position, and $\sigma$ denotes the activation function, such as ReLU. 
$W_{in}$ and $W_{out}$ are two-layer linear transformation matrices.

For more recent LLMs, such as the Llama series, that adopt advanced FFN variants, the mechanism of knowledge storage remains functionally equivalent.
As shown in \autoref{fig:llama ffn}, this FFN consists of three linear transformation matrices ($W_{gate}$, $W_{down}$, and $W_{up}$), and the process of generating value can be represented as:
\begin{equation}
v^l_i = W^l_{down} \cdot [\mathrm{Swish}(W^l_{gate}(h^{l-1}_i)) \otimes W^l_{up}(h^{l-1}_i)], \nonumber
\end{equation}
where $\otimes$ is the Hadamard Product and Swish is an activation function. 
Although differing structurally from the classical FFN, the combined operations of $W_{gate}$ and $W_{up}$ can still be regarded as the first layer to form the key, while $W_{down}$ serves as the second layer to transform key to value. 

\mypara{MHSA for Semantics Aggregation}
MHSA provides LLMs with powerful capabilities of semantic aggregation~\cite{geva2023dissecting}.
On one hand, it aggregates semantic information from preceding tokens to the current position, thus facilitating awareness of context.
On the other hand, it extracts task-relevant knowledge from the representations refined by FFNs' knowledge storage mechanism.
This synergy effect enables LLMs to understand the complex context and utilize the rich knowledge encoded in LLMs' parameters to generate high-quality text.

\subsection{Causal Tracing} 
\label{subsec: Causal Tracing}
\textit{Causal tracing}~\cite{meng2022locating} can quantify the importance of activations across layers and components for generating correct output.
The workflow mainly contains three stages:

\mypara{Clear Run} 
The objective is to collect the original states of generation.
We execute a forward pass with the given prompts to collect activations across all components, denoted as $H=\{h^{l}_{m,i}\mid m \in \mathrm{\{ MHSA,FFN,Layer \}}\}$, where $l$ is the layer index and $i$ is the token position.
Simultaneously, we record the probability $\mathbb{P}[t_{a_1}]$ of the model correctly predicting the first answer token $t_{a_1}$.

\mypara{Corrupted Run} 
The objective is to collect the corrupted states with the perturbed input embedding. 
We add perturbing noise to the embedding of input tokens to prevent the model from making a correct prediction.
Corresponding to ``Clear Run,'' collect the activations $H^*$ and the probability $\mathbb{P}^{*}$ of the correct first answer token in the current case. 

\mypara{Corrupted-with-Restoration Run} 
The objective is to restore each activation to quantify its importance for the correct prediction.
With the perturbed input embeddings, we individually replace the activations of each component across layers with the correct state from ``Clear Run,'' and record the resulting probability $\mathbb{P}^{*}_{m^l_i}[t_{a_1}]$, where $m^l_i$ is the restored component type $m$, layer $l$, and token $i$.

\subsection{Null-Space}
\label{subsec: null-space}
\textit{Null-space} is an effective technique for enhancing the stability and plasticity of deep learning networks in post-training scenarios~\cite{fangalphaedit, wang2021training}.
The core idea is to confine parameter updates within the null-space of the input feature space learned from previous tasks, thereby protecting prior knowledge from updated parameters.

For network $\mathcal{M}$ trained on prior tasks with data $x_{old}$, let $\Delta w^l$ denote the update gradients of a new task applied to parameters $W_{old}$ at layer $l$.
When $\Delta w^l$ lies within the null-space of the \textit{input features space} $X^l_{old}$ of previous tasks, there is $x^l_{old} \cdot \Delta w^l = 0$~\cite{wang2021training}.
The proof is given in Appendix \ref{sec: proof}.
This property prevents detrimental disruption to the representations at layer $l$ learned from previous tasks during continual learning:
\begin{equation}
x^l_{old} \cdot (W^l_{old} + \Delta w^l)  = x^l_{old} \cdot W^l_{old} + 0= y^l_{old} 
\nonumber
\end{equation}
Therefore, constraining gradients within such a null-space could preserve the integrity of the model's previous foundational capabilities:
\begin{equation}
\mathcal M (x_{old}, W_{old}) = \mathcal M'(x_{old}, W_{new}) = y_{old}
\nonumber
\end{equation}
where $W_{new} = W_{old}+\Delta{w}$.
Briefly, when updates are constrained in the null-space of input features of layer $l$, the outputs at $l$ for previous tasks remain ``\textit{invariant}'' despite the updated parameters.

\section{Problem Statement and Existing Solutions}
\subsection{Problem Formulation}
\label{subsec: LLM Unlearning}
LLM unlearning is defined as the process of selectively removing target knowledge from a pre-trained LLM $\mathcal{M}_\theta$, which is trained on dataset $\mathcal{D}$.
Let $\mathcal D_f \subset \mathcal{D}$ be the target subset of data to be removed, which is referred to as \textit{``forgetting knowledge''} or \textit{``forgetting set.''}
Correspondingly, $\mathcal D_r =\mathcal{D} \backslash \mathcal D_f $ is denoted as retained data, the subset of $\mathcal{D}$ excluding $\mathcal{D}_f$, referred to as \textit{``retaining knowledge''} or \textit{``retaining set''}.
Formally, the objective of LLM unlearning is to derive an unlearned model $\mathcal M_{\theta'}$ whose behaviors are consistent with the model $\mathcal M_{re}^{\mathcal D_r}$ retrained on $\mathcal D_r$:
\begin{equation}
 \mathcal M_{\theta'}(y\mid x) \approx  
 \mathcal  M_{re}^{\mathcal D_r} (y\mid x ),
 \quad \forall (x,y) \in \mathcal{D},
 \nonumber
\end{equation}
where $\mathcal M_{re}^{\mathcal D_r}$ is considered as the \textit{golden baseline} unlearned model. 
Briefly, LLM unlearning is the process of removing knowledge of $\mathcal D_f$ while retaining knowledge of $\mathcal D_r$.

\subsection{Existing Solutions}
\label{subsec: existing solutions}

\mypara{GA~\cite{jang-etal-2023-knowledge}}
\textit{Gradient ascent} (GA) is the most representative method for LLM unlearning.
The general idea is to perform reverse optimization (gradient ascent instead of descent) on $\mathcal D_f$, rewarding LLMs for producing incorrect outputs. 
Concretely, GA can be formulated as \textit{maximizing} the negative log-likelihood loss:
\begin{equation}
\mathcal{L}_{\text{GA}} = \mathbb{E}_{\mathcal D_f}
\left[ \textstyle \sum \log\left(p_\theta(x_t \mid x_{<t}\right) \right], \nonumber
\end{equation}
where $p_\theta(x_t|x_{<t})$ is the prediction probability of token $x_t$ calculated by LLM with parameters $\theta$ and input tokens $x_{<t}$. 

However, directly training with $\mathcal{L}_{GA}$ could cause severe model collapse~\cite{zhangnegative}.
To address this, \textit{gradient difference} (GradDiff)~\cite{mainitofu} introduces a constraint term (referred to as retaining loss $\mathcal{L}_{r}$) on $\mathcal{D}_r$ to prevent excessive forgetting.
The weighted sum constitutes the formal unlearning objective~\cite{yao2024large}:
\begin{equation}
\mathcal{L}_{u} = \mathcal{L}_{f}(x \in \mathcal D_f; \theta) + \alpha \mathcal{L}_{r}(x \in \mathcal D_r; \theta), \nonumber
\end{equation}
where $\mathcal{L}_{f} = \mathcal{L}_{\text{GA}}$, and $\alpha$ is a scaling factor for the balance between forgetting and retention.

\mypara{NPO~\cite{zhangnegative}}
\textit{Negative preference optimization} (NPO) is an alternative to GA.
Inspired by DPO~\cite{rafailov2023direct}, NPO regards the forgetting data as negative samples,
and facilitates forgetting by penalizing the model for generating the dispreferred data:
\begin{equation}
\begin{aligned}
\mathcal{L}_{\text{NPO}}(\theta) = -\frac{2}{\beta}\mathbb{E}_{\mathcal D_f}
\left[\log \sigma \left(-\beta \log \frac{\pi_\theta(y \mid x)}{\pi_{ref}(y \mid x)}\right)\right]
, \nonumber
\end{aligned}
\end{equation}
where $\sigma(\cdot)$ is the sigmoid function, and $\pi_{ref}$ and $\pi_\theta$ denote the original and the unlearned model, respectively. 
In comparison with GA, NPO resolves the absence of an optimization lower bound and exhibits a slower divergence rate, leading to a more stable unlearning process.

\mypara{RMU~\cite{li2024wmdp}}
\textit{Representation misdirection for unlearning} (RMU) is the state-of-the-art method of partial parameter updates.
It suppresses the generation of target knowledge by driving the internal activations of empirically selected layers towards a random uniform distribution, formulated as:
\begin{equation}
\begin{aligned}
\mathcal{L}_{\text{RMU}} = \mathbb{E}_{\mathcal{D}_f} \left[ \| a_f(x) - c \cdot u \|_2^2 \right] 
+ \alpha \mathbb{E}_{\mathcal{D}_r} \left[  \| a_f(x) - a_r(x) \|_2^2 \right]
, \nonumber
\end{aligned}
\end{equation}
where $u$ is a random vector, and $a$ represents the activations. 
$c$ and $\alpha$ denote a forgetting coefficient and a scaling factor.
RMU achieves knowledge removal by disrupting the original activations of forgetting data, while retaining model utility by enforcing activation consistency on retaining data.

\begin{figure*}[!t]
    \centering
    \includegraphics[width=1.0\hsize]{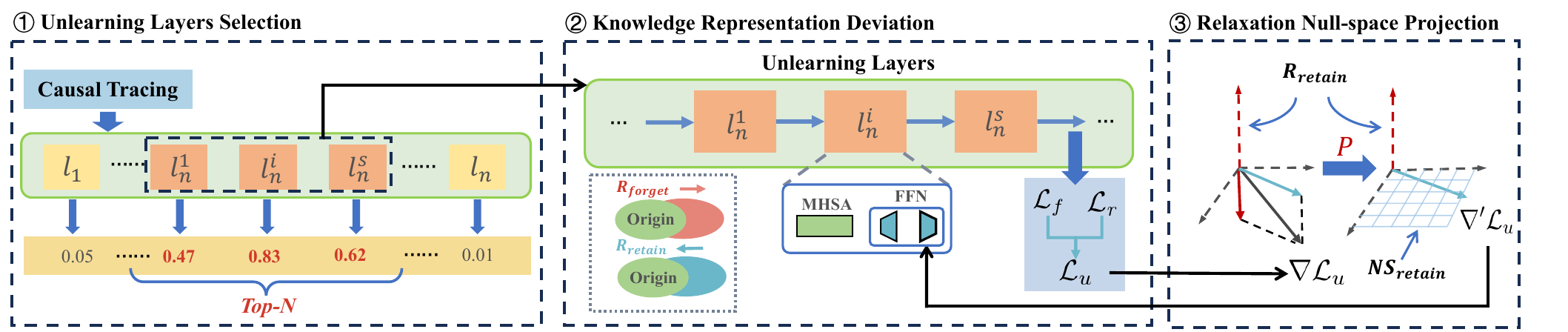}
    \caption{The overview of \sysname pipeline to balance knowledge removal and utility retention for LLMs. \sysname includes three major stages: 1) We employ causal tracing and causal effects to identify FFNs critical for knowledge storage, and apply a sliding window to select the target \textit{unlearning layers} for parameter updates. 
    2) The representations generated from the last unlearning layer are captured and utilized to derive the update gradients $\nabla \mathcal{L}_u$ based on the mechanism of \textit{knowledge representation deviation}. 
    3) The gradients are projected into the \textit{relaxed null-space} of retaining knowledge, and ${\nabla' \mathcal{L}_u}$ is then applied to the last linear transformation matrix of FFN in each unlearning layer.
    $R_{forget}$ and $R_{retain}$ are the representations of forgetting and retaining knowledge. $NS_{retain}$ denotes the relaxed null-space of retaining knowledge. 
    }
    \label{fig:method}
\end{figure*}

\mypara{Drawbacks of Existing Solutions}
The paradigm of full parameter updates exhibits a significant advantage in knowledge removal by deliberately inducing the generation of random responses.
However, given that all parameters are updated indiscriminately, they frequently result in irreversible utility degradation or complete model collapse, and may only achieve response-level mitigation instead of removing knowledge encoded in parameters.
Conversely, partial parameter methods prioritize targeted interventions yet often fail to achieve a satisfactory degree of knowledge removal~\cite{fan2025llmunlearningresilientrelearning,yang2025faithun}.
This deficiency primarily stems from a weak coupling between parameter selection and the knowledge removal algorithm.
While the property of knowledge storage offers a promising design perspective, related efforts remain confined to restricting updates to FFNs' parameters, lacking a synergistic design that aligns the entire pipeline with this property.
For instance, RMU~\cite{li2024wmdp} induces divergence in activations to remove knowledge, yet its parameter selection relies on empirical search.


\section{\sysname}

\subsection{Intuition}
\label{subsec: intuitions}
As detailed in \autoref{subsec: Large Language Model}, FFNs act as \textit{knowledge storage} by 
activating relevant knowledge stored in parameters to influence the generation.
We define the transition process from key to value as ``\textit{knowledge association}'' (\autoref{fig:general ffn}), as this stage retrieves knowledge related to input tokens.
Through this, the representations contain not only the semantic context of the inputs but also their associations with relevant knowledge.

This property enables controllable knowledge removal by selectively perturbing the knowledge associations in specific FFNs to disrupt the generation of target knowledge, while leaving MHSA components intact.
Although MHSA may also contribute to knowledge-related computation, it is closely coupled with contextual aggregation, which supports general language modeling.
Therefore, updating MHSA may broadly degrade contextual processing and model utility.
We verify this phenomenon in Appendix~\ref{subsec:ffn_mhsa}.

While conceptually intuitive, the direct quantification and modification of target knowledge encoded in FFNs remains a formidable challenge.
However, since representations encapsulate both input semantics and their associations with relevant knowledge~\cite{elazar-etal-2021-amnesic,geva2023dissecting}, they offer a more accessible pathway for intervening in knowledge associations.
Consequently, we propose a representation-based unlearning method, named \textit{\textbf{K}nowledge \textbf{U}nlearning by \textbf{D}eviating represent\textbf{A}tion} (\sysname), to disrupt the utilization of target knowledge for generation.

\subsection{Overview}
\label{subsec: Overview}
\sysname aims to leverage the property of knowledge storage in Transformer-based models~\cite{geva2021transformer,meng2022locating} for LLM unlearning. 
To remove knowledge stored in FFNs, \sysname employs selective parameter updates that deviate the representations of target knowledge away from their original states.
The pipeline is illustrated as \autoref{fig:method}.

\mypara{Step 1: Unlearning Layers Selection}
Our approach begins by identifying layers whose FFNs are specialized in knowledge storage.
To this end, we employ causal tracing~\cite{meng2022locating} and define a metric named ``causal effect'' to achieve a component-level identification.
Based on causal effect, we propose a sliding window search strategy to select unlearning layers, a portion of whose parameters are the targets for subsequent updates.
We refer readers to \autoref{subsec: Layers selection} for more details.

\mypara{Step 2: Knowledge Representations Deviation}
Knowledge stored in FFN parameters is retrieved and incorporated into representations through knowledge association.
Building upon this, we design a knowledge removal loss, which involves inducing the representation of target knowledge away from its original state. 
This deviation perturbs the established knowledge associations, thereby disrupting the generation of target knowledge.
Moreover, a constraint loss is also introduced to limit the changes in the representation space.
The detailed design of unlearning objectives is presented in \autoref{subsec: knowledge representation deviation}.

\mypara{Step 3: Null-space Projection}
To address directional conflicts between forgetting and retention, we introduce the null-space to filter out the components of the unlearning gradient derived in Step 2 that would degrade model utility.
To prevent the null-space from excessively suppressing the knowledge removal, we further introduce a relaxation operation, yielding a relaxed null-space projection.
Besides, we propose stochastic proportional sampling for the computational bottleneck. 
The projected gradients preserve utility without compromising the effectiveness of forgetting.
See \autoref{subsec: null-space projection} for more details.

\subsection{Unlearning Layers Selection}
\label{subsec: Layers selection}
\sysname begins by identifying the LLM layers whose FFNs are predominantly responsible for storing knowledge, thereby guiding the selection of parameters to update.

\mypara{Causal Effect}
Previous studies on layer selection for unlearning primarily focused on either the entanglement degree between forgetting and retaining knowledge~\cite{pochinkov2024dissectinglanguagemodelsmachine}, or on the layer's association with the LLMs' refusal behaviors~\cite{shen2025lunar}.
However, these coarse-grained, layer-level methods overlook the functional distinctions among the components within a layer.
Concretely, the effects of MHSA and FFN are merged through the residual connection, making the results of layer selection inherently influenced by both components simultaneously.
This indicates that the selected layers may be critical for semantics processing dominated by MHSA~\cite{jawahar2019bertlearn,hanna2023does}, instead of those driven by FFNs for knowledge storage (see \autoref{subsec: Large Language Model} for more details).
Consequently, such coarse-grained selection may inadvertently impair the models' capabilities for context awareness or text generation.

To exploit the knowledge storage property of LLMs, we propose a fine-grained, component-level identification method.
Specifically, we identify layers whose FFNs exhibit specialization in knowledge storage, thereby decoupling the roles of MHSA and FFN during layer selection.
This motivates the adoption of \textit{causal tracing}, which quantifies the contribution of each intermediate activation, as detailed in \autoref{subsec: Causal Tracing}.
After performing causal tracing, we define the average probability difference between ``Corrupted Run'' and ``Corrupted-with-restoration Run'' as \textit{Causal Effect} ($\mathrm{CE}^l_{m}$), which measures the importance of components $m$ at layer $l$ for correct generation:
\begin{equation}
\mathrm{CE}^l_{m}=\frac{1}{B\cdot T} \sum^B  \sum^T_{i=1} (\mathbb{P}^*_{m^l_i} - \mathbb{P}^*),
\end{equation}
where $B$ and $T$ represent the input batch size and sequence length. 
Concretely, a higher $\mathrm{CE}^l_{FFN}$ score indicates that the FFN at layer $l$ makes a significant contribution in storing and retrieving knowledge during generation.

Ideally, causal tracing should be performed on specific datasets (\eg, $\mathcal{D}_f$ or $\mathcal{D}_r$) to pinpoint FFNs storing target knowledge.
However, as discussed in \autoref{subsec: efficiency}, the substantial computational overhead of causal tracing renders it impractical to recompute CE for different unlearning datasets.
Notably, our analysis in Appendix~\ref{subsec: causal effect of various datasets} shows that the distribution of $\mathrm{CE}^l_{FFN}$ is consistent across datasets.
Motivated by this observation, we execute causal tracing \textit{only once per model} using ``1,000 factual statements~\cite{meng2022locating}.''
Since this dataset spans a wide range of knowledge domains, the generated CE scores exhibit generalizability across topics and thus can be reused to mitigate the need for causal tracing in future unlearning tasks.

\mypara{Selection Strategy}
Based on empirical investigation, we observe that though with high $\text{CE}^l_{FFN}$ scores, \sysname is sensitive to the layer positions.
Concretely, the first several layers of the FFN-dominant zone primarily handle generic features~\cite{jawahar2019bertlearn}, while those layers located at its terminus facilitate the functional transition from knowledge retrieval to semantic aggregation, rendering both of them ill-suited as unlearning layers.
Besides, contiguous layers should be prioritized to preserve the coherence of the representation space across layers.
See Appendix~\ref{subsec: unlearning layers analysis} for more detailed analysis.

Therefore, we design a \textit{sliding window search} strategy to select unlearning layers, as presented in \autoref{alg: sliding window search}.
First, we construct a candidate set $\mathcal{S}_{top}$ containing $N$ layers with the highest $\text{CE}^l_{FFN}$ scores, ordered by layer indices.
Starting from the median layer of $\mathcal{S}_{top}$ and iterating toward the deeper layers, we treat a candidate layer $l_c$ as a central anchor to bidirectionally expand a sliding window $\mathcal{W}_{c,s}$ of size $s$.
The layers within $\mathcal{W}_{c,s}$ are selected as \textit{Unlearning Layers} only if $\mathcal{W}_{c,s}$ meets the Hit Ratio (HR) condition:
\begin{equation}
    \text{HR}(\mathcal{W}_{c,s}) = \frac{|\mathcal{W}_{c,s} \cap \mathcal{S}_{top}|}{s} > 0.5
\end{equation}
With a majority-principle threshold of $0.5$, the HR constraint ensures that the window is mainly composed of layers with significant contributions to knowledge storage, filtering out isolated outliers.

\begin{algorithm}[!t]
\caption{Sliding Window Search}
\label{alg: sliding window search}
\begin{algorithmic}[1]
\Require Scores $\{\text{CE}^l_{FFN}\}$, candidate size $N$, window size $s$
\Ensure Selected unlearning layers $\mathcal{W}_{un}$
\State Select the top-$N$ layers according to $\mathrm{CE}_{FFN}^l$
\State $\mathcal{S}_{top} = [l_{(1)}, \cdots, l_{(N)}]$, where ${(1)} \prec \cdots \prec {(N)}$
\State $p \gets \lceil s/2 \rceil$ \Comment{Boundary offset}
\For{$i = \lceil (N+1)/2 \rceil$ \textbf{to} $N$} 
    \State $l_{c} \gets \mathcal{S}_{top}{[i]}$  \Comment{Central anchor}
    \State $\mathcal{W}_{c,s} \gets \{l_{c-p+1}, \cdots,l_{c-1},l_{c}, l_{c+1},\cdots, l_{c+s-p}\}$ 
    
    \If{$ \text{HR}(\mathcal{W}_{c,s}) > 0.5$} \Comment{Hit Ratio condition}
        \State \Return $\mathcal{W}_{un} \gets \mathcal{W}_{c,s}$
    \EndIf
\EndFor
\State \Return $\mathcal{W}_{un} \gets \mathcal{W}_{\lceil (N+1)/2 \rceil,s}$
\end{algorithmic}
\end{algorithm}

\subsection{Knowledge Representation Deviation}
\label{subsec: knowledge representation deviation}

As discussed in \autoref{subsec: intuitions}, input tokens activate knowledge stored in specific FFNs during generation, establishing associations that are then incorporated, along with the semantic context of inputs, into the hidden representations~\cite{geva2021transformer,meng2022locating}.
For instance, given the input ``Apple CEO,'' the representations of specific layers might contain the association with ``Tim Cook,''  which is the implicit knowledge stored in FFNs and triggered by the input context.
Such knowledge associations refine the information in representations and are subsequently extracted by MHSA to influence the final output~\cite{geva2023dissecting}.
Therefore, disrupting these associations in representations can effectively inhibit the generation of target knowledge, achieving the goal of unlearning.
To this end, we design the representation-based objectives for target knowledge removal.

\mypara{Forgetting Loss}
Inspired by preference optimization~\cite{rafailov2023direct,zhangnegative}, we design a deviation mechanism to push the representation of forgetting knowledge away from its original state. 
Once the hidden representations are perturbed from their pre-trained state, the incorporated associations with relevant knowledge can no longer be effectively utilized by subsequent layers.

While preference optimization relies on the similarity of model outputs, the lack of a direct mapping between hidden representations and outputs calls for a substitute metric to quantify the similarity of representations.
To address this issue, we adopt \textit{cosine similarity} between representations of the unlearning model and the original ones. 
Given the high dimensionality and sparsity of representations, cosine similarity is particularly well-suited, as it measures direction consistency rather than magnitude and is robust to sparse activation patterns.
Besides, its value transition of $1 \xrightarrow{} 0$ provides a mathematical implication of ``reducing similarity.''
This process effectively drives the representation away from its original state,  intuitively aligning with the goal of preference optimization.
Furthermore, we adopt the cosine complement $(1-\cos)$, which enables similarity reduction while simultaneously ensuring the training objective remains convergent.
Accordingly, we refine preference optimization by integrating the aforementioned designs to construct the forgetting objective, which induces deliberate deviation in the representations of forgetting knowledge:
\begin{equation}
\begin{gathered}
\mathrm{d}_f(x) = \cos \langle \mathcal{R}_{un}^{l'}(x),\mathcal{R}_{ori}^{l'}(x) \rangle \\
\mathcal L_f = -\frac{2}{\beta}\mathbb{E}_{\mathcal D_f}
\left[
\log \sigma 
\left( 
\beta \cdot \left( 1-\mathrm{d}_f(x) \right) 
\right) 
\right],
\end{gathered}
\end{equation}
where $x \in \mathcal D_f$ are the forgetting data, $\cos\langle\cdot\rangle$ denotes the cosine similarity, and $\sigma(\cdot)$ refers to the sigmoid function. 
The hyperparameter $\beta$ represents the inverse unlearning temperature.
Notably, $\mathcal{R}_{un}^{l'}$ and $\mathcal{R}_{ori}^{l'}$ denote the representations at the \textit{last unlearning layer} $l' \in \mathcal{W}_{un}$ from the unlearning and the original models. 
Since knowledge is distributed across layers and aggregated through residual connections, the representation at layer $l'$ integrates a substantial amount of forgetting knowledge and corresponding associations. 
This makes it a principled choice as the target in $\mathcal{L}_f$.

Briefly, the minimization of $\mathcal{L}_f$ serves to reduce similarity $\cos\langle\cdot\rangle$ between the current and original representations. 
This derives a deviation of the target knowledge representation from its pre-trained state, which inhibits the subsequent generation of that knowledge.

\mypara{Retaining Loss}
Deviating target representations can inadvertently impact unrelated knowledge.
To restrict the scope of the forgetting effect, we introduce an additional constraint term.
Classical output-level constraints, which employ KL-divergence or cross-entropy to restrict token distribution changes, cannot be directly adopted to deal with hidden representations. 
The reason is that the representations reside in a continuous vector space rather than a discrete probability space.
Therefore, we adopt an $\ell_2$-regularization term on $\mathcal D_r$ to effectively penalize excessive deviations within the latent representation space: 
\begin{equation}
\mathcal L_r = \mathbb{E}_{\mathcal D_r}
\left[ 
\scalebox{1.3}{$\ell$}_2 \big( \mathcal{R}_{un}^{l'}(x),\mathcal{R}_{ori}^{l'}(x) \big)
\right],
\end{equation}
where $x \in \mathcal D_r$ are the retaining data, and other symbols are consistent with those defined in $\mathcal L_f$.
Beyond preserving factual knowledge, $\mathcal D_r$ is also essential for retaining the structural knowledge, such as grammatical and syntactic patterns, that form the foundation of excellent linguistic capabilities~\cite{ren2025sokmachineunlearninglarge}.

\mypara{Unlearning Loss}
The final unlearning loss is formulated as the composite of forgetting and retaining objectives, which are responsible for knowledge removal and utility retention  :
\begin{equation}
\mathcal L_u = \mathcal L_f + \mathcal L_r
\end{equation}
In contrast to conventional methods, we discard the weighted unlearning objective (\ie, $\mathcal L_f + \alpha \cdot \mathcal L_r$), which typically employs a scaling factor to balance forgetting and retention. 
In practice, due to the difference in loss functions and task requirements, the tuning effect of such hyperparameters might be non-linear and substantially increase the cost of hyperparameter searching in specific tasks. 
This complexity, in turn, diminishes the adaptability across diverse unlearning scenarios.
We will elaborate on its fundamental limitation and our improvement in \autoref{subsec: null-space projection}.

\mypara{Parameter Update}
According to \autoref{subsec: Large Language Model}, the \textit{last linear transformation matrix $W_{out}$} (referred to as $W_{down}$ in some LLMs, such as Llama) in FFNs plays a crucial role in knowledge association. 
Therefore, we perform unlearning by applying parameter updates $\nabla_w\mathcal{L}_u$ exclusively to $W_{out}$ of FFNs in \textit{all unlearning layers}, since knowledge is distributed across multiple layers~\cite{geva2021transformer}.
This selective refinement enables precise removal of target knowledge while preserving model utility. 

\subsection{Relaxed Null-Space Projection}
\label{subsec: null-space projection}

As shown in \autoref{subsec: null-space}, null-space has the property of \textit{invariance}, motivating us to adopt this in reconciling the conflicts between two unlearning targets.

\mypara{Drawbacks of Scaling Factor}
The most representative paradigm for utility retention is to incorporate the retaining loss and employ a scaling factor to achieve the trade-off between forgetting and retention, \ie, $\mathcal L_f + \alpha \cdot \mathcal L_r$.
However, previous studies~\cite{wangrethinking,wang2025gru,zhangnegative} have revealed its limited effectiveness in preventing utility degradation, due to the failure to reconcile the conflict between forgetting and retention.
Specifically, any component of the unlearning gradients that acts in opposition to the retention direction will inevitably degrade model utility.
This stems from a fundamentally \textit{directional} conflict that cannot be resolved by merely adjusting the magnitude of gradients through the scaling factor $\alpha$.

These limitations motivate a subspace transformation mechanism to eliminate the utility-harming components of the unlearning gradient.
The invariance property of null-space provides a solution.
Concretely, we project gradients $\nabla_{w^l} \mathcal{L}_u$ into the null-space of the input feature of $\mathcal D_r$ at layer $l$.
This remains the representations of retaining knowledge \textit{invariant} despite updated parameters, as introduced in \autoref{subsec: null-space}.

\mypara{Input Feature Capturing} 
We first approximate the input feature space of retaining knowledge at $W_{out}$ using the uncentered covariance statistic of $k$, where $k$ denotes the input of $W_{out}$ as defined in \autoref{subsec: Large Language Model}.
The covariance captures dominant correlations across feature dimensions. 
Concretely, we feed retaining data $x_r \in \mathcal D_r$ into the original model to collect $k^l_r$ from all unlearning layers.
Based on this, the uncentered covariance statistic $\mathcal{K}^l_r \in \mathbb{R}^{\mathrm{d} \times \mathrm{d}}$ is denoted as the input feature space and formulated as:
\begin{equation}
    \mathcal{K}^l_r =  \frac{1}{n_r} \sum \nolimits^{n_r}_{i=1}(k^l_{r,i})^\top k^l_{r,i}
\end{equation}

However, this phase requires $\left| \mathcal D_r \right|$ times of forward passes to collect the input features, which limits scalability in practice for large-scale retaining datasets.
To this end, we propose \textit{Stochastic Proportional Sampling}, which constructs an approximate input feature space using a sampled retaining subset $\hat{\mathcal{D}}_r \subseteq \mathcal{D}_r$:
\begin{equation}
\begin{gathered}
\hat{\mathcal{D}}_r
=
\{x_{r_i} \mid i \sim \mathrm{Uniform}(1, |\mathcal{D}_r|)\} \\
\text{ s.t. }|\hat{\mathcal{D}}_r| = p\% \cdot |\mathcal{D}_r|, \\
\end{gathered}
\end{equation}
where $p$ is the sampling ratio.
This stochasticity ensures the statistical representativeness of the captured feature distribution, while the proportional sampling effectively reduces computational overhead by reducing the number of forward passes.

\mypara{Null-Space Relaxation}
Next, we apply \textit{singular value decomposition} (SVD) to $\mathcal{K}^l_r$ to identify feature directions that are uncorrelated with retaining knowledge:
\begin{equation}
U^l, \Lambda^l, (U^l)^\top = \mathrm{SVD}(\mathcal{K}^l_r),
\end{equation}
where the subset of singular vectors $U^l_\mathrm{z} \subset U^l$, corresponding to zero singular values, constitute the \textit{strict null-space}~\cite{wang2021training}.
The gradients are then projected into the null-space of retaining knowledge with the \textit{null-space projection matrix} ${U_\mathrm{z}}^l ({U_\mathrm{z}^l})^\top$~\cite{wang2021training}.
However, given the representation \textit{entanglement} between the forgetting and retaining knowledge, a strict projection matrix inadvertently constrains the updates to representations of forgetting knowledge, thereby hindering the thorough removal.

To this end, we introduce a \textit{null-space relaxation threshold} $\tau$ to relax the strict constraint of zero-singular-value.
Specifically, we construct an approximate null-space by selecting the singular vectors whose corresponding singular values fall below $\tau$:
\begin{equation}
    \mathcal{N}^l_\tau = \left\{ u^l_i \mid \lambda^l_i < \tau, \lambda^l_i \in \Lambda,  u^l_i \in U^l \right\},
\end{equation}
Based on this, the \textit{relaxed null-space projection matrix} is then defined as $P^l_\tau = \mathcal{N}^l_\tau (\mathcal{N}^l_\tau)^\top \in \mathbb{R}^{\mathrm{d} \times \mathrm{d}}$.
This constrains gradients such that updates occur along directions weakly correlated with retaining knowledge, thus removing target knowledge without incurring significant degradation to utility.

\mypara{Gradients Projection}
Finally, before updating parameters, we project the gradients into the relaxed null-space of retaining knowledge with $P_{\tau}^l$, ensuring that the update gradients do not impair the content generation related to retaining knowledge:
\begin{equation}
\nabla'_{w^l} \mathcal{L}_u = P^l_{\tau} \cdot \nabla_{w^l} \mathcal{L}_u  
\end{equation}
where $\nabla'_{w^l} L_u$ is adopted to update $W_{out}^l$ at layer $l$.

\subsection{Principled Hyperparameter Configuration}
\label{subsec: principled hyperparameter configuration}
The performance of \sysname is mainly influenced by the inverse unlearning temperature ($\beta$) and the null-space relaxation threshold ($\tau$).
$\beta$ controls the intensity of forgetting and enhances knowledge removal when decreased.
A smaller $\tau$ makes the null-space less correlated with retaining knowledge, thereby enhancing utility retention while weakening forgetting.
As such, we propose a \textit{two-stage tuning} strategy for practical deployment:
\begin{enumerate}[label=\arabic*), leftmargin=*, nosep]
    \item \textbf{Stability Boundary Identification.}
    We fix $\beta=0.1$ to enforce a state of excessive forgetting.
    Under this setting, we decrease $\tau$ and monitor an overall forgetting metric, which is later formalized as KRD in \autoref{eq: KRD}.
    As shown in \autoref{fig: tau boundary sample}, we set $0.9$ as a reference and define the \textit{stability boundary} as the first $\tau$ at which the forgetting metric KRD falls below the reference.
    Further detailed analyses are provided in Appendix~\ref{subsec: stability boundary}.
    \item \textbf{Unlearning Balance.}
    Based on the identified boundary, we select a $\tau$ slightly \textit{larger} than it to ensure effective removal.
    For instance, the boundary in \autoref{fig: tau boundary sample} is $1 \times 10^{-3}$, leading to the choice of $\tau = 2 \times 10^{-3}$.
    Then, we increase $\beta$ across $\{ 0.1, 1.0, 2.0\}$. 
    This process mitigates forgetting intensity to balance knowledge removal and utility retention.
\end{enumerate}
The principled strategy prioritizes the stability of the representation space under an extreme forgetting scenario and then refines the unlearning balance, thereby transforming a complex joint optimization into two sequential one-dimensional searches.

\begin{figure}[!t]
    \centering
    \includegraphics[width=1.0\hsize]{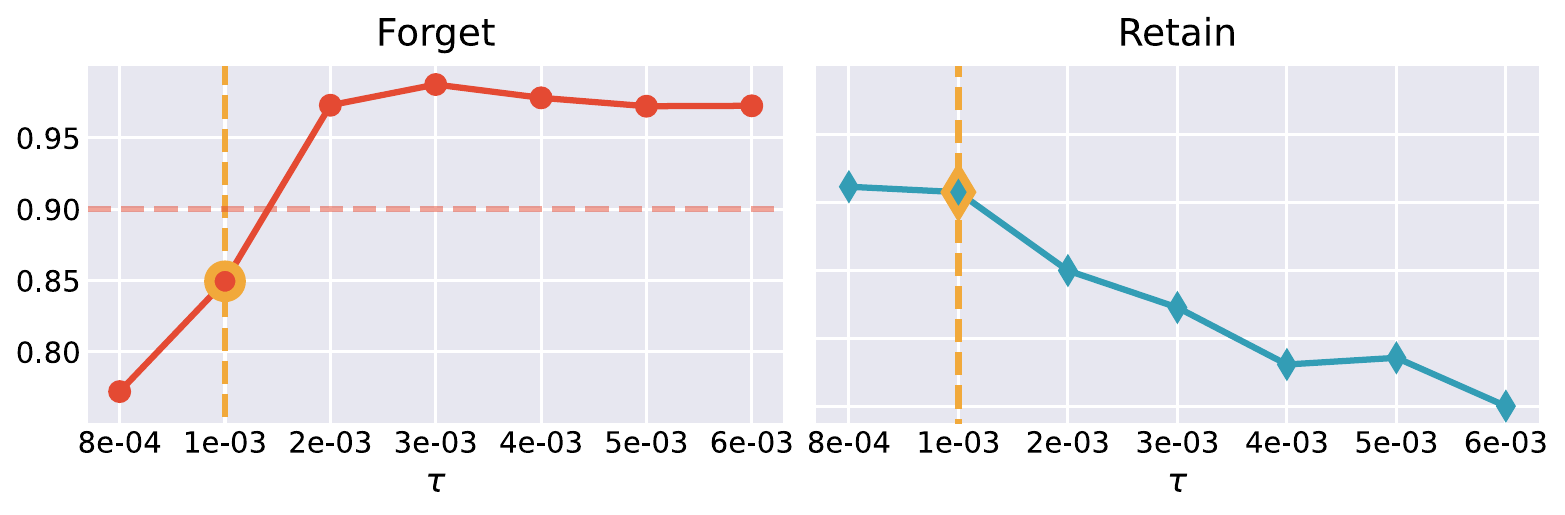}
    \caption{
    Unlearning with fixed $\beta=0.1$ and different $\tau$.
    The two plots show the forgetting and retention performance.
    A reference line at $0.9$ is set for the forgetting metric.
    The highlighted $\tau$ denotes the \textit{stability boundary}, defined as the first $\tau$ where the forgetting metric falls below the reference.
    }
    \label{fig: tau boundary sample}
\end{figure}

\section{Evaluation}
\label{sec:evaluation}

\subsection{Experimental Setup}
\label{subsec:experimental setup}

There are multiple well-established benchmarks for evaluating the performance of various LLM unlearning algorithms.
These benchmarks incorporate a set of datasets, metrics, and models.
In this paper, we evaluate the performance of \sysname on two representative benchmarks, MUSE~\cite{shi2024muse} and WMDP~\cite{li2024wmdp}.

\mypara{Datasets}
The datasets consist of \textit{forgetting set} $\mathcal{D}_f$ and \textit{retaining set} $\mathcal{D}_r$. 
\begin{itemize}
    \item \textbf{MUSE.} 
    It consists of two datasets, BOOKS and NEWS. 
    For BOOKS, the forgetting set comprises text excerpts from the Harry Potter series, while the retaining set is sampled from the Harry Potter Fandom Wiki. 
    For NEWS, the forgetting and retaining sets are constructed by randomly partitioning BBC News articles published after August 2023. 
    
    \item \textbf{WMDP.} 
    It focuses on the malicious misuse of hazardous knowledge in LLMs, aiming to evaluate the ability to remove dangerous knowledge from models. 
    Its forgetting set comprises harmful knowledge in Biosecurity (Bio) and Cybersecurity (Cyber), collected from GitHub and PubMed, respectively.
    The retaining set is from Wikitext.
\end{itemize}

\begin{table*}[]
\caption{
The comprehensive unlearning performance comparison between ours and baselines on MUSE, evaluating the forgetting quality and retaining utility from multiple dimensions. 
KRD represents the overall effectiveness of forgetting, and KnowMem in $\mathcal D_r$ reflects the retention performance.
The best results in Forgetting Quality and Retaining Utility are highlighted in \textit{\textbf{bold}}, and the suboptimal ones are marked with an \underline{underline}.
}
\label{tab:muse performance}
\setlength{\tabcolsep}{0.65em} 
\renewcommand{\arraystretch}{1.1}
\footnotesize
\centering
\begin{tabular}{c ccccc ccccc}
\toprule[1pt]
\multirow{3}{*}{\textbf{Unlearning Method}} & \multicolumn{4}{c}{\textbf{Forgetting Quality}} & \textbf{Retaining Utility} & \multicolumn{4}{c}{\textbf{Forgetting Quality}} & \textbf{Retaining Utility} \\ \cmidrule(lr){2-5} \cmidrule(lr){6-6} \cmidrule(lr){7-10} \cmidrule(l){11-11}
                                 &
                                 \begin{tabular}[c]{@{}c@{}}{VerbMem}\\ $\mathcal D_f$$\downarrow$\end{tabular} 
                                 & \begin{tabular}[c]{@{}c@{}}{KnowMem}\\ $\mathcal D_f$$\downarrow$\end{tabular}
                                 & \begin{tabular}[c]{@{}c@{}}{PrivLeak}\\ $\rightarrow{} 0$\end{tabular}
                                 & \begin{tabular}[c]{@{}c@{}}{\textbf{KRD}}\\ $\boldsymbol{\rightarrow{} 1}$\end{tabular}
                                 & \begin{tabular}[c]{@{}c@{}}{\textbf{KnowMem}}\\ $\boldsymbol{\mathcal D_r \uparrow}$\end{tabular}       
                                 & \begin{tabular}[c]{@{}c@{}}{VerbMem}\\ $\mathcal D_f$$\downarrow$\end{tabular} 
                                 & \begin{tabular}[c]{@{}c@{}}{KnowMem}\\ $\mathcal D_f$$\downarrow$\end{tabular}
                                 & \begin{tabular}[c]{@{}c@{}}{PrivLeak}\\ $\rightarrow{} 0$\end{tabular}
                                 & \begin{tabular}[c]{@{}c@{}}{\textbf{KRD}}\\ $\boldsymbol{\rightarrow{} 1}$\end{tabular}
                                 & \begin{tabular}[c]{@{}c@{}}{\textbf{KnowMem}}\\ $\boldsymbol{\mathcal D_r \uparrow}$\end{tabular}           \\ \midrule
                                 & 
                                 \multicolumn{5}{c}{\textbf{BOOKS}} & \multicolumn{5}{c}{\textbf{NEWS}} \\
\cmidrule(lr){2-6} \cmidrule(l){7-11}
Origin                           & 99.59                & 58.76                 & 56.68                & 0.0000                    & 67.00                    & 58.42                & 63.41                 & 99.84                & 0.0000                    & 54.96                    \\
Retrain                          & 14.35                & 28.90                 & 0.00                  & 1.0000                    & 74.38                    & 20.80                & 33.30                 & 0.00                  & 1.0000                    & 54.68                    \\ \midrule
GA                            & 11.00                & 23.96                 & 40.50                & 0.5452                    & 34.24                    & 4.95                 & 31.08                 & 108.10                & 0.0002                    & 27.33                    \\
NPO                           & 20.13                & 27.68                 & 41.35                & 0.5200                    & 44.75                    & 7.60                 & 54.62                 & 105.79                & 0.0003                    & {\ul 41.27}                    \\
WHP                              & 19.04                & 51.49                 & 51.56                 & 0.1849                    & \textbf{63.47}                    & 17.15                & 23.17                 & 97.56                 & 0.0658                    & 29.90                    \\
SimNPO                           & 0.00                 & 3.24                  & 17.15                & {\ul 0.8737}                    & 49.55                    & 13.21                & 47.82                 & 11.75                 & {\ul 0.7380}                    & 40.66                    \\
RMU                              & 7.04                 & 30.92                 & 20.45                & 0.8248                    & 58.37                    & 18.14                & 32.31                 & 93.77                & 0.1627                    & 38.95                    \\ 

\rowcolor[RGB]{227,234,241}\textbf{\sysname}                         & 0.05                 & 26.73                 & 16.55                & \textbf{0.8792}                    & {\ul 61.97}                    & 8.71                 & 0.26                  & 20.09                 & \textbf{0.9225}                   & \textbf{53.08}                    \\ \bottomrule[1pt]
\end{tabular}
\end{table*}

\mypara{Models}
We need two models to evaluate the performance of unlearning methods: 
The ``origin'' model is the starting point of the unlearning process and serves as the performance ceiling for retention.
The ``retrain'' model denotes the golden baseline and provides an ideal and expected reference for forgetting.

\begin{itemize}
    \item \textbf{MUSE.} 
    It adopts ICLM‑7B~\cite{shi2023context} and Llama‑2‑7B as ``retrain'' models for the BOOKS and NEWS, respectively. 
    Since neither of their pre-training corpora includes MUSE datasets, they can be approximately considered equivalent to retrained. 
    These models are then fine-tuned on MUSE datasets to obtain ``origin'' models, which are released on Hugging Face.

    \item \textbf{WMDP.} 
    It adopts Zephyr-7B-Beta as the ``origin'' model, and we extend to modern ones, \eg, Llama‑3.1‑8B and Qwen3-8B. 
    A ``retrain'' model is unavailable due to the widespread distribution of hazardous knowledge across training corpora.
\end{itemize}

\mypara{Metrics}
Different benchmarks design their own metrics based on the unique characteristics of datasets.
These metrics can be broadly classified into two categories: \textit{forgetting quality} (FQ) and \textit{retaining utility} (RU). 
Due to the space limitation, we defer the formal definition of these metrics to Appendix~\ref{subsec: Metrics}.
\begin{itemize}
    \item \textbf{MUSE.} 
    For \textit{FQ}, MUSE designs three metrics (\textit{VerbMem}, \textit{KnowMem} on $\mathcal D_f$, and \textit{PrivLeak}). 
    The first two metrics reflect the ability to reproduce training data and to answer questions related to forgetting knowledge.
    When values are lower than those of the ``retrain'' model, forgetting is considered completed.
    Besides, a value closer to $0$ is preferred for PrivLeak, reflecting good defense against membership inference attacks (MIA). 
    Regarding \textit{RU}, MUSE utilizes \textit{KnowMem} on $\mathcal D_r$, and the value closer to the ``origin'' model indicates better retention. 
    
    \item \textbf{WMDP.} 
    It assesses \textit{FQ} with \textit{Answer Acc} (Acc) on multi-choice test sets, including biology and cybersecurity. 
    $25\%$ is the ideal result, indicating that the model lacks cognition of specific knowledge and thus answers randomly.
    Besides, WMDP adopts \textit{MMLU}~\cite{hendrycks2020measuring}, a multi-domain benchmark covering STEM and humanities, to assess RU of broad world knowledge.
    The score closer to that of the ``origin'' model reflects better retention of general knowledge.  
\end{itemize}
RU is evaluated via a single metric, whereas FQ is characterized by multiple metrics, making it difficult to compare the \textit{forgetting} performance across methods.
Therefore, we propose \textit{knowledge removal degree (KRD)}, an aggregation of forgetting metrics:
\begin{equation}
    \mathrm{KRD}=\frac{n}{\sum^n_{i=1} {\frac{1}{\mathcal{FQ}^i}}},
    \label{eq: KRD}
\end{equation}
where $\mathcal{FQ}^i$ denotes the $i$-th forgetting metric in the specific benchmark (\eg, VerbMem in MUSE), and is adjusted to avoid rewarding excessive forgetting.
The adjustment process is detailed in Appendix~\ref{subsec: Metrics}.
When any metric shows poor forgetting, KRD tends towards 0; otherwise, it approaches 1.
A favorable forget-retain trade-off is therefore indicated when KRD is close to 1 while the retention metric does not substantially degrade.
Notably, KRD is designed for comparing different methods within the same benchmark.

\mypara{Competitors}
We compare with a series of representative methods: GA~\cite{jang-etal-2023-knowledge}, NPO~\cite{zhangnegative}, SimNPO~\cite{fan2024simplicity}, and the SOTA method RMU~\cite{li2024wmdp}. 
All methods are implemented with GradDiff (GD)~\cite{yao2024large}. 
For other baselines, we conduct WHP~\cite{eldan2310s} in MUSE.
See Appendix~\ref{subsec: Baselines Implementation} for more implementation details.

\subsection{Overall Unlearning Performance}
\label{subsec: overall performance}
We evaluate \sysname on MUSE and WMDP, and quantify unlearning performance in terms of forgetting quality and retaining utility.

\mypara{Performance on MUSE}
\autoref{tab:muse performance} presents the results on MUSE. 
In general, \sysname demonstrates outstanding effectiveness of knowledge removal in both BOOKS and NEWS. 
The optimal KRD values indicate comprehensive forgetting capabilities, ranging from the deletion of verbatim memorization to privacy protection.
Concretely, \sysname significantly reduces both VerbMem and KnowMem on $\mathcal D_f$ below those of the ``retrain'' model, suggesting that the LLM’s capacity to utilize the target forgetting knowledge for generation has been effectively suppressed.
In contrast, WHP, NPO, and SimNPO only weaken the model's ability to reproduce specific training text, reflecting superficial forgetting.
Additionally, it can be observed that \sysname also shows great improvement in PrivLeak, only slightly underperforming SimNPO in NEWS. 
This reveals that the model unlearned by \sysname can effectively defend against MIA, demonstrating solid capabilities of privacy protection.

Furthermore, \sysname achieves a great balance between forgetting effectiveness and model utility retention.
In NEWS, \sysname minimizes degradation in retaining utility while achieving maximal knowledge removal.
In BOOKS, although slightly weaker than WHP in retention, \sysname presents significant forgetting effectiveness, indicating a more balanced unlearning performance;
In comparison, WHP yields a model with a state close to ``not forgetting.''
Notably, GA thoroughly removes target knowledge via excessive forgetting~\cite{shi2024muse}, but at the cost of severely compromising model utility and even increasing vulnerability to MIA.
From this perspective, \sysname reveals that its unlearning mechanism dives into knowledge encoded in LLMs, without excessive destruction to model utility.

Moreover, RMU presents strong performance in both forgetting and retention. 
However, \sysname excels further in dataset generalization, comprehensive knowledge removal, and utility preservation.

\begin{table}[!t]
\caption{The unlearning performance comparison between ours and baselines on WMDP. 
Acc-Bio and Acc-Cyber denote the answer accuracy on different test sets. 
Acc-Avg denotes the weighted average accuracy, with the weights determined by the number of questions in each set. 
Other definitions are consistent with those of \autoref{tab:muse performance}.
}
\label{tab:wmdp performance}
    \setlength{\tabcolsep}{0.35em}
    \renewcommand{\arraystretch}{1.1}
    \footnotesize
    \centering
\begin{tabular}{cccccc}
\toprule[1pt]
\multicolumn{1}{c}{\multirow{2}{*}{\textbf{Method}}} & \multicolumn{4}{c}{\textbf{Forgetting Quality}}                                      & \textbf{Retaining Utility} \\ \cmidrule(lr){2-5} \cmidrule(l){6-6} 
\multicolumn{1}{c}{}                        & Acc-Bio$\downarrow$         & Acc-Cyber$\downarrow$             & Acc-Avg$\downarrow$              & \textbf{KRD}$\boldsymbol{\rightarrow{} 1}$ & \textbf{MMLU}$\boldsymbol{\uparrow}$                     \\ \midrule
\multicolumn{6}{c}{\textbf{Zephyr-7B-Beta}}                                                                                                                                              \\ \midrule
\multicolumn{1}{c}{Origin}                  & 63.82           & 43.78                 & 51.60                 & 0.0000                    & 58.13                    \\ \midrule
\multicolumn{1}{c}{GA}  & 27.06    & 26.82           & 26.92           & \textbf{0.9244}                    & 33.02                    \\
\multicolumn{1}{c}{NPO} & 42.93           & 36.34                 & 38.91                 & 0.6813                    & 45.71                    \\
\multicolumn{1}{c}{SimNPO}                  & 39.61           & 33.79                 & 36.06                 & 0.5741                    & 48.26                    \\
\multicolumn{1}{c}{RMU}                     & 29.22           & 27.93                 & 28.43                 & 0.8665                    & {\ul 56.90}              \\
\rowcolor[RGB]{227,234,241}\multicolumn{1}{c}{\textbf{\sysname}}                 & 29.92 & 26.82        & 28.03       & {\ul 0.8879}           & \textbf{57.14}           \\ \midrule
\multicolumn{6}{c}{\textbf{Llama-3.1-8B}}                                                                                                                                                  \\ \midrule
\multicolumn{1}{c}{Origin}                  & 59.62           & 41.87                 & 48.80                 & 0.0000                    & 63.31                    \\ \midrule
\multicolumn{1}{c}{RMU}                     & 30.00    & 26.27                 & 27.73                 & 0.8888                    & 58.33                    \\
\rowcolor[RGB]{227,234,241}\multicolumn{1}{c}{\textbf{\sysname}}                 & 29.22 & 25.00 & 26.28 & \textbf{0.9350}           & \textbf{60.64}    \\ \midrule
\multicolumn{6}{c}{\textbf{Qwen-3-8B}}                                                                                                                                                    \\ \midrule
\multicolumn{1}{c}{Origin}                  & 74.16           & 45.14                 & 56.47                 & 0.0000                    & 73.01                    \\ \midrule
\multicolumn{1}{c}{RMU}                     & 36.52    & 34.35                 & 35.30                 & 0.6304                    & 60.02                    \\
\rowcolor[RGB]{227,234,241}\multicolumn{1}{c}{\textbf{\sysname}}                 & 28.43 & 26.17 & 27.05 & \textbf{0.9360}           & \textbf{68.92}    \\ \bottomrule[1pt]
\end{tabular}
\end{table}

\mypara{Performance on WMDP}
In \autoref{tab:wmdp performance}, we present the performance on WMDP. 
In general, \sysname achieves both effective forgetting and strong utility retention. 
It removes $88.8\%$ of the hazardous knowledge stored in Zephyr-7B-Beta, with only a negligible $1.3\%$ drop in retaining utility.
Notably, it maintains better performance than RMU, even though RMU is the SOTA method on this benchmark.
As for other baselines, GA leads to severe degradation in retaining utility for achieving forgetting. 
NPO and SimNPO struggle to eliminate the model's ability to utilize harmful knowledge, despite SimNPO's strong performance on MUSE.
Both methods achieve forgetting by suppressing target-response generation, failing to remove the underlying harmful knowledge.
In contrast, \sysname achieves multi-dimensional unlearning.

Furthermore, since the ``origin'' model for WMDP does not require dataset-specific fine-tuning, we extend our experiments to more modern architectures, Llama-3.1 and Qwen-3.
In subsequent experiments, we only compare with RMU, as both \sysname and RMU substantially outperform all other baselines.

Overall, \sysname consistently surpasses RMU, particularly on Qwen-3.
Both the Zephyr and Llama series originate from the Llama architecture family, sharing high similarity in model design and compatibility. 
In contrast, Qwen-3 incorporates architectural innovations such as QK-Norm and a deeper stack of layers. 
We attribute RMU’s limited transferability primarily to poor compatibility with these architectural differences.
This demonstrates \sysname's generalization across models.

\mypara{Performance on TOFU}
We further conduct experiments on TOFU~\cite{mainitofu}, and the detailed experiments are presented in Appendix \ref{sec: tofu} due to space limitations.
The results show that \sysname achieves complete forgetting without utility degradation. 
Together with preceding results, this indicates \sysname’s generalizability across copyrighted content, dangerous knowledge, and personal information.

\begin{table}[!t]
\caption{The unlearning performance on WMDP-Augment using a domain-adjacent but benign $\mathcal{D}_r$ relative to $\mathcal{D}_f$.
}
\label{tab: comparable forgetting of rmu and kuda}
    \setlength{\tabcolsep}{0.35em}
    \renewcommand{\arraystretch}{1.0}
    \footnotesize
    \centering
\begin{tabular}{cccccc}
\toprule[1pt]
\multicolumn{1}{c}{\multirow{2}{*}{\textbf{Method}}} & \multicolumn{4}{c}{\textbf{Forgetting Quality}}                                      & \textbf{Retaining Utility} \\ \cmidrule(lr){2-5} \cmidrule(l){6-6} 
\multicolumn{1}{c}{}                        & Acc-Bio$\downarrow$         & Acc-Cyber$\downarrow$             & Acc-Avg$\downarrow$              & \textbf{KRD}$\boldsymbol{\rightarrow{} 1}$ & \textbf{MMLU}$\boldsymbol{\uparrow}$                     \\ \midrule
\multicolumn{1}{c}{RMU}                     & 33.22           & 26.57                 & 29.17                 & \cellcolor{blue!24}0.8475                    & \cellcolor{pink!20}51.85             \\
\multicolumn{1}{c}{\textbf{\sysname}}                 & 32.00 & 26.50        & 28.70       & \cellcolor{blue!28}0.8670           & \cellcolor{red!20}54.50           \\
\bottomrule[1pt]
\end{tabular}
\end{table}

\subsection{Mechanism Analysis}
\label{sec:mechanism analysis}
We aim to investigate the underlying reasons why \sysname achieves a better balance between knowledge removal and utility retention than other unlearning baselines. 
Therefore, we conduct a mechanistic analysis experiment from the perspective of \textit{gradient dynamics}. 

\mypara{Setup}
Previous studies~\cite{wang2025gru,wangrethinking} have identified the optimization challenge in gradient-based unlearning methods: the gradients of $\mathcal L_f$ and $\mathcal L_r$ mutually impede each other during the parameter updates.
This highlights the essence of non-conflicting gradients for successful unlearning.
Based on the insight, we design an experiment focusing on gradient conflicts to elucidate the unlearning mechanism of \sysname.

We conduct experiments on WMDP and replace the original $\mathcal D_r$, Wikitext.
Because Wikitext is independent of cyber- or bio-related hazardous content in $\mathcal{D}_f$, it avoids significant conflicts between forgetting and retention objectives, thereby facilitating a high-quality unlearning balance.
To magnify the optimization conflicts, we construct a stringent scenario by replacing $\mathcal{D}_r$ with benign knowledge that is semantically adjacent to $\mathcal{D}_f$.
We compare with RMU, the SOTA method in WMDP, and tune hyperparameters until the RMU-unlearned and \sysname-unlearned models achieve comparable forgetting quality.
To prevent excessive forgetting, we ensure that both models achieve scores above 25 on the Bio and Cyber evaluation tests.
Throughout training, we record the cosine similarity between the gradients of $\mathcal L_f$ and $\mathcal L_r$, and convert it to the angular value in degrees for intuitive interpretability.

\begin{figure*}[!t]
    \centering
    \includegraphics[width=1.0\hsize]{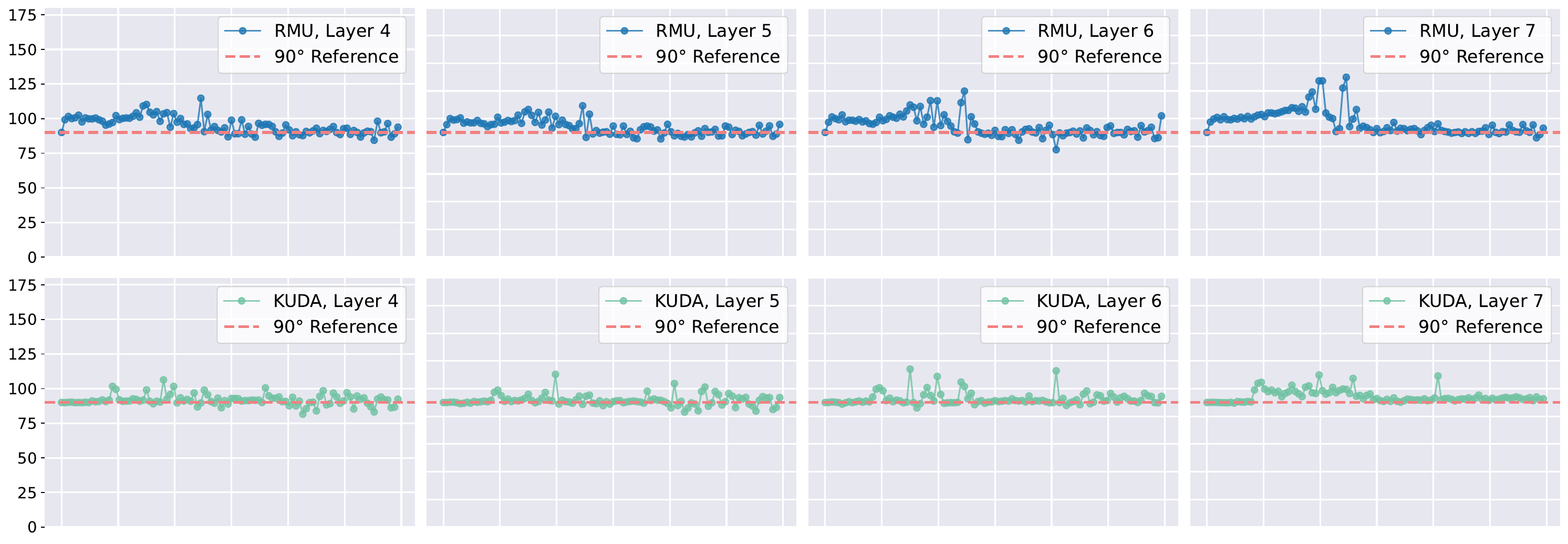}
    \caption{
    Evolution of angles between two gradients during unlearning with RMU and \sysname. 
    We trained on WMDP for 600 steps, showcasing the gradient angle fluctuations throughout the entire process.}
    \label{fig:cos angle}
\end{figure*}

\mypara{Observations}
As shown in \autoref{tab: comparable forgetting of rmu and kuda}, under comparable forgetting quality, RMU incurs more severe degradation of model utility.
This indicates that \sysname allows more precise and targeted knowledge removal, even when it is removed from semantically proximate or related knowledge domains.
\autoref{fig:cos angle} shows the evolution of gradient angles, further revealing the underlying mechanism behind the above observations from the perspective of gradient alignment. 

The gradient angles of RMU predominantly fall within the range of $90^{\circ}\sim 120^{\circ}$, particularly in the early and middle phases.
This suggests a severe imbalance between conflicting objectives, increasing the risk of collapse in model utility.
This may stem from the design of RMU's losses (see introduction in \autoref{subsec: existing solutions}), which employs symmetric MSE for both forgetting and retention, and is thus prone to gradient conflicts.

In contrast, \sysname presents more stable gradient dynamics, with the gradient angles between $\mathcal L_f$ and $\mathcal L_r$ fluctuating around $90^{\circ}$.
This illustrates that \sysname, as a gradient-based method, inherently maintains near-orthogonal alignment between forgetting and retaining gradients during the unlearning process. 
Therefore, it avoids the dilemma of optimization conflicts posed by antagonistic interference, thus eliminating the need to compromise the model utility for knowledge removal. 
We attribute this advantage to the synergy effect of well-designed unlearning losses and the relaxed null-space projection.

\subsection{Hyperparameter Analysis}
\label{subsec: hyperparameter}
We further evaluate the impact of two critical hyperparameters and support our guidance for hyperparameter search.

\mypara{Setup}
We adopt WMDP for hyperparameter analysis due to the well-curated data scale and evaluation efficiency, which enable rapid iteration over variant configurations.
We vary inverse unlearning temperature $\beta$ over a set of $\{0.1, 1.0, 2.0\}$, sample five values for null-space relaxation threshold $\tau$ from the range $3 \times 10^{-4} \sim 3 \times 10^{-3}$, and then conduct a cross-combination study over these hyperparameters.

\mypara{Impact of Inverse Unlearning Temperature $\beta$}
\label{sec:impact of unlearning temperature reciprocal}
\autoref{fig:hyperparameters} presents a heatmap of the hyperparameter experiment results, where darker shades denote better performance.
It can be observed that the choice of $\beta$ significantly affects the unlearning performance: smaller values of $\beta$ lead to stronger knowledge removal but at the cost of degrading the model utility retention.
Moreover, the impact of $\beta$ on forgetting exhibits a similar tendency across different values of $\tau$.

These results correspond to the design of \sysname, where $\beta$ controls the step size of the forgetting process.
In general, as $\beta$ increases, the divergence degree gradually weakens.
This leads to a gradual reduction in the intensity of knowledge removal, with the unlearned model tending toward a state of ``weak forgetting.'' 
Correspondingly, the retaining utility improves progressively, approaching that of the ``origin'' model.

\begin{figure}[!t]
    \centering
    \includegraphics[width=1.0\hsize]{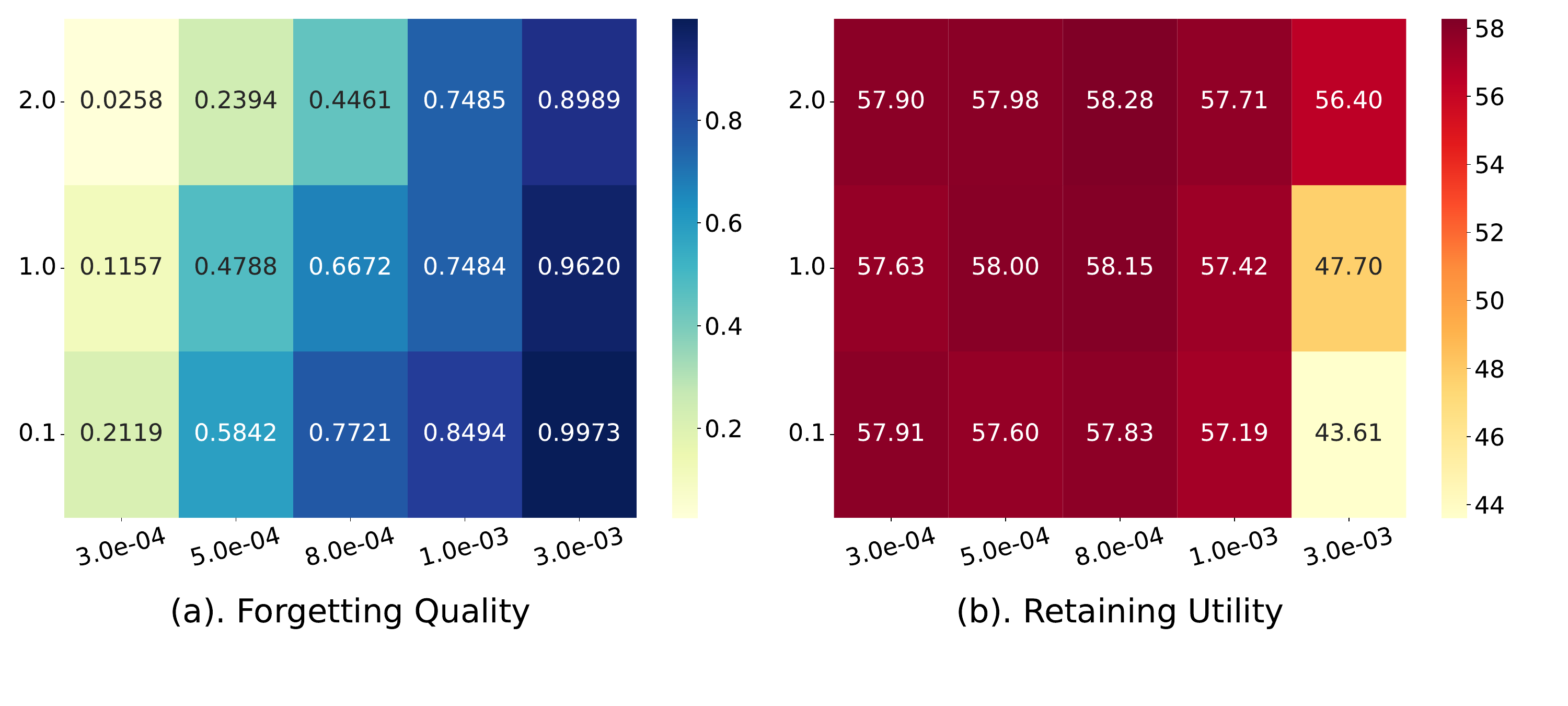}
    \caption{
    Unlearning performance heatmap across hyperparameter combinations of $\beta$ (vertical axis) and $\tau$ (horizontal axis). 
    Plots (a) and (b) are quantified with KRD and MMLU score, respectively.
    Darker colors mean better performance.
    }
    \label{fig:hyperparameters}
\end{figure}

\mypara{Impact of Null-Space Relaxation Threshold $\tau$}
\label{sec:impact of null-Space construction threshold}
As shown in \autoref{fig:hyperparameters}, decreasing $\tau$ significantly enhances retaining utility while gradually weakening forgetting effectiveness.
Notably, the retaining utility is sensitive to variations in $\beta$ under a large $\tau$; however, this sensitivity declines and shifts towards robustness to variations in $\beta$ once $\tau$ drops below a certain value.
We identify the specific $\tau$ as ``\textit{stability boundary}'', at or below which the forgetting process is significantly suppressed, as discussed in \autoref{subsec: principled hyperparameter configuration}.

The observed trend aligns with our predictions, confirming that $\tau$ influences the trade-off between forgetting and retention. 
Specifically, $\tau$ dictates the singular value filtering range, thus controlling the selection of singular vectors used to construct the approximate null-space. 
Lower values of $\tau$ mean fewer singular vectors, which typically correspond to feature directions with minimal contribution to the representation space. 
Therefore, the null-space becomes weakly correlated with the representations of retaining knowledge.
Due to the entanglement discussed in \autoref{subsec: null-space projection}, projecting gradients into this subspace limits perturbation to retaining knowledge, while it also constrains knowledge removal. 
The interaction explains why $\tau$ affects \sysname's sensitivity to changes of $\beta$.

\mypara{Summary}
Based on the comprehensive analysis, we identify the observations:
(1) Decreasing $\beta$ strengthens the forgetting intensity and leads to more aggressive removal of target knowledge;
(2) Decreasing $\tau$ improves retaining utility at the cost of reduced forgetting effectiveness;
(3) Knowledge removal is suppressed if $\tau$ is at or below the stability boundary.
(4) When $\tau$ falls below stability boundary, the variations in $\beta$ yield limited influence on retention.

These observations demonstrate the coupled effects of $\beta$ and $\tau$, where both hyperparameters concurrently modulate the trade-off between knowledge removal and utility retention. 
Beyond their direct influence on unlearning, $\tau$ further modulates how sensitive \sysname is to the variations in $\beta$.
This interdependence highlights the necessity of our two-stage tuning strategy that simplifies the complex joint optimization into two decoupled linear search processes.
Moreover, Observation (3) refines the second stage of our strategy, in which $\tau$ is set to a value slightly larger than the stability boundary.

\subsection{Ablation Study}
\label{sec:ablation study}
We also conduct ablation experiments to verify the effectiveness of \sysname's components.

\mypara{Setup}
Rather than removing each component, we use functional alternatives to conduct ablations.
For \textit{layer selection}, direct removal would make the subsequent stages lose their execution targets: $\mathcal{L}_{\mathrm{KUDA}}$ is obtained from the representation at the last selected layer, while gradients are only calculated for the selected layers. 
We therefore replace the selected layers with five alternative layer groups based on two naive selection criteria: layer depth and component functionality.
(1) Depth-based layers, comprising Shallow ($l_3 \sim l_6$), Middle ($l_{12} \sim l_{15}$), and Deep ($l_{21} \sim l_{24}$) layers; and (2) Functional partitions, focusing on the FFN-dominant ($l_3 \sim l_9$) and MHSA-dominant ($l_{14} \sim l_{20}$) zones as shown in \autoref{fig:CE_zephyr}. 
The layers selected by \sysname are $l_6 \sim l_9$.

For \textit{unlearning loss}, removing $\mathcal{L}_{\mathrm{KUDA}}$ would eliminate the signal that drives target knowledge removal.
We therefore replace it with $\mathcal{L}_{\mathrm{RMU}}$, a representative alternative that likewise provides a representation-level unlearning objective.

For \textit{gradient projection}, training could still proceed after its removal, but the explicit mechanism for balancing forgetting and retention would be lost.
We therefore replace it with GradDiff, a commonly used alternative that performs such balancing through scalar loss weighting $\mathcal L_u = \mathcal L_f + \alpha \cdot \mathcal L_r$, where $\alpha$ is the scaling factor.

\mypara{Effectiveness of Layer Selection}
\label{subsec:effectiveness of layer selection}
\begin{figure}[!t]
    \centering
    \includegraphics[width=1.0\hsize]{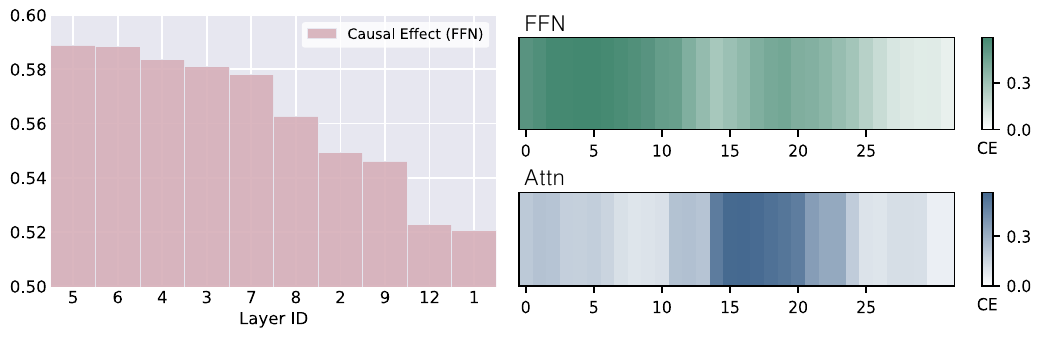}
    \\[-1ex]
    \caption{
    Visualization of Causal Effect (CE).
    The left plot presents the layers with the top-10 CE in FFN components. 
    The right panels present the global causal effects, with darker colors indicating stronger CE. 
    }
    \label{fig:CE_zephyr}
\end{figure}
\begin{figure}[!t]
    \centering
    \includegraphics[width=1.0\hsize]{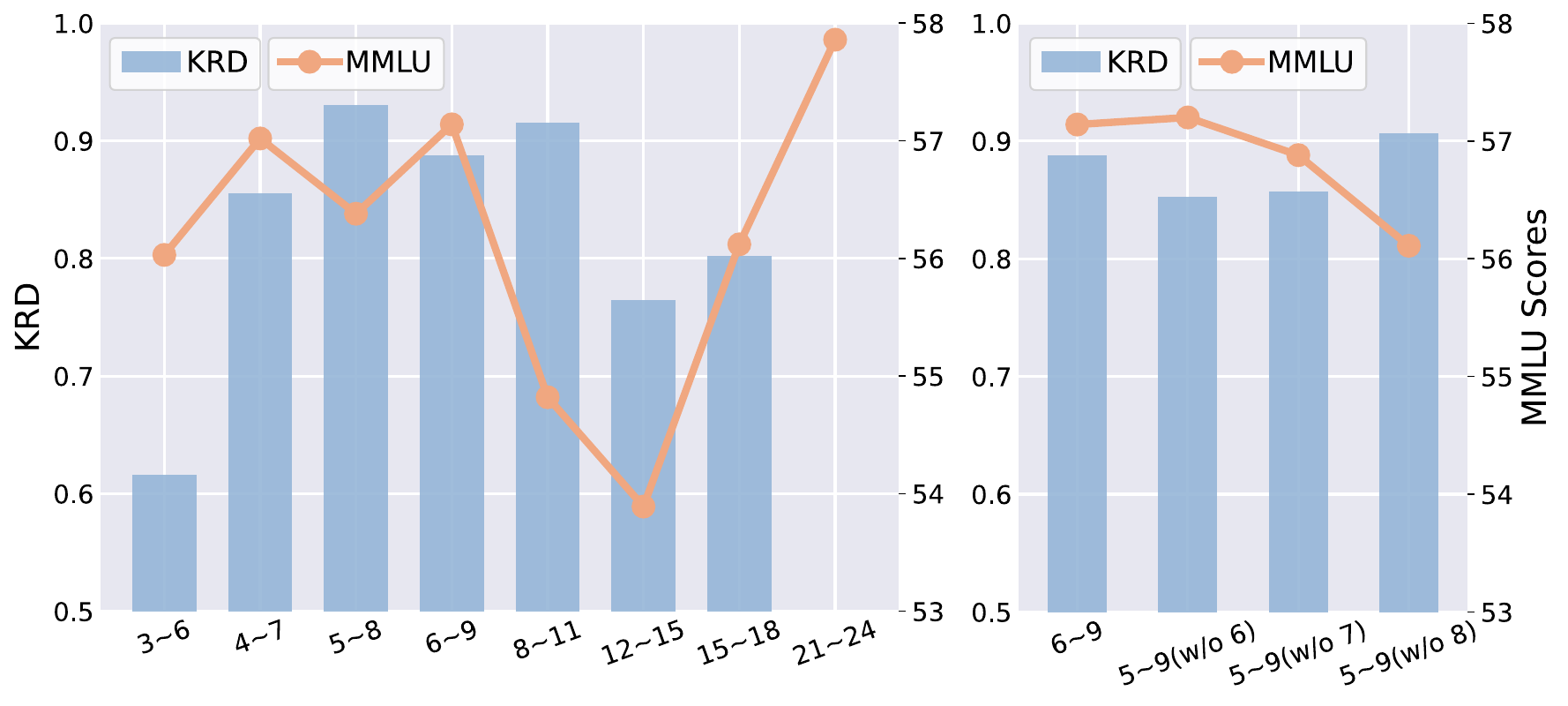}
    \caption{
    Unlearning performance across different unlearning layers. 
    The horizontal axis shows the unlearning layers.
    $i \sim j$ denotes that all layers from $i$ to $j$ are designated as unlearning layers, and ``w/o $k$'' means the excluded layer $k$.
     }
    \label{fig:layer selection}
\end{figure}
\autoref{fig:layer selection} illustrates that the position of unlearning layers significantly affects the unlearning trade-off. Depth-based selections, especially middle and deep layers, fail to achieve effective knowledge removal despite aggregating knowledge via residual streams.
Besides, the layers $l_6 \sim l_9$ selected by \sysname fall within the FFN-dominant region, suggesting that our selection strategy decouples the FFN-dominant and MHSA-dominant layers.
The results in \autoref{fig:layer selection} indicate that this decoupling enables knowledge removal without compromising the models' core utility.
Notably, we observe a robust layer selection interval ($l_4\sim l_9$) that consistently yields superior performance, with minor trade-off fluctuations between forgetting and retention. 
All layers within the interval have high $\text{CE}_{FFN}$, and layers identified by our strategy also fall within this interval.

\mypara{Effectiveness of Knowledge Representation Deviation Loss}
\label{subusec: effectiveness of knowledge representation deviation Loss}
To simplify the analysis, we adopt Acc and MMLU scores on WMDP, along with KnowMem in $\mathcal D_f$ and $\mathcal D_r$ on MUSE, as intuitive metrics of forgetting quality and retaining utility.

\autoref{tab:ablation loss and ns} demonstrates that the relaxed null-space projection is better suited to $\mathcal L_{\text{KUDA}}$ than to $\mathcal L_{\text{RMU}}$. 
The stronger compatibility enables their combination to effectively improve both knowledge removal and utility retention.
This suggests that the asymmetric design of $\mathcal{L}_{\text{KUDA}}$ serves as a fundamental prerequisite that allows relaxed null-space projection to retain model utility without suppressing knowledge removal.

\mypara{Effectiveness of Null-space Projection}
\label{subsec:effectiveness of null-space constraining projection}
\autoref{tab:ablation loss and ns} demonstrates that, with the losses of \sysname, null-space projection outperforms GradDiff in balancing forgetting and retention, consistently across different datasets and models.  
Particularly, at comparable forgetting quality, null-space projection significantly improves retention performance.
We attribute this to its ability to alleviate optimization conflicts through subspace transformations, rather than merely adjusting the relative magnitudes of gradients with a scaling factor.
Therefore, it eliminates the adverse influence of the gradients on the representation space of retaining knowledge.

\begin{table}[!t]
\caption{
The forgetting quality and retaining utility of unlearned models with different loss functions $\mathcal{L}_u$ (\ie, $\mathcal L_{\text{RMU}}$ and $\mathcal L_{\text{KUDA}}$), GradDiff (GD), and Null-Space Projection (NSP).
WMDP-Augment uses benign knowledge that is semantically related to $\mathcal D_f$ as $\mathcal D_r$.}
\label{tab:ablation loss and ns}
    \setlength{\tabcolsep}{0.2em}
    \renewcommand{\arraystretch}{1.0}
    \footnotesize
    \centering
\begin{tabular}{cccc}
\toprule
\textbf{Dataset}
& \textbf{Method} & \textbf{Forgetting Quality $\downarrow$} & \textbf{Retaining Utility $\uparrow$} \\
\midrule
\multirow{4.5}{*}{\textbf{WMDP-Augment}} 
    & $\mathcal L_{\text{RMU}}$ + GD & 28.17 & 51.85 \\
    & $\mathcal L_{\text{RMU}}$ + NSP & 27.45 & 51.22 \\
    \cmidrule(l){2-4}
    & $\mathcal L_{\text{KUDA}}$ + GD & 28.60 & 52.48 \\
    & $ \mathcal L_{\text{KUDA}}$ + NSP & 27.40 & 55.50 \\
\midrule
\multirow{4.5}{*}{\textbf{MUSE-BOOKS}} 
    & $\mathcal L_{\text{RMU}}$ + GD & 30.92 & 58.37 \\
    & $\mathcal L_{\text{RMU}}$ + NSP & 34.58 & 53.60 \\
    \cmidrule(l){2-4}
    & $\mathcal L_{\text{KUDA}}$ + GD & 31.67 & 55.95 \\
    & $ \mathcal L_{\text{KUDA}}$ + NSP & 32.48 & 63.11 \\
\bottomrule
\end{tabular}
\end{table}

\subsection{Robustness Analysis}
\label{subsec: Robustness}
\autoref{subsec: overall performance} presents the effectiveness of \sysname in non-adversarial scenarios.
In practice, adversaries might attempt to recover the removed knowledge.
We illustrate the robustness of \sysname by evaluating the performance against three attacks.

\mypara{Setup}
We launch three adversarial attacks, Min-K$\%$ Prob MIA~\cite{shi2024detecting}, FOCUSONKEY~\cite{zhang2025focusonkey}, and Quantization-based Knowledge Recovery~\cite{zhang2025quantization} on the unlearned models.
The first attack is an MIA specialized for LLMs to probe training data membership leakage. 
The second is a prompt-based attack designed to reproduce removed knowledge from unlearned models.
The third is a quantization-based attack that attempts to recover the knowledge through low-bit precision.
By assessing the unlearned models' resistance to these attacks, we aim to verify whether the supposedly removed knowledge remains latently persistent within the model parameters.
\autoref{fig:Attack} presents the results against the first two attacks, where the \textit{low values} indicate robust knowledge removal. 
Details on the implementation and evaluation are provided in Appendix \ref{sec: Adversarial Attacks}. 

\mypara{Robustness Against MIA}
\autoref{fig:Attack}a presents the results under MIA. 
Ideally, the relative AUC of an unlearned model approaches that of the ``Retrain'' baseline.
While RMU achieves relatively competitive results on BOOKS, it performs poorly on the NEWS dataset, which suggests inconsistent removal degree across datasets.
In contrast, \sysname consistently achieves stable scores close to the ``Retrain'' model on both datasets, indicating superior thoroughness in removing data traces.
The minor gap compared to the ideal results can be attributed to \sysname's objective of preserving utility on knowledge semantically related to $\mathcal{D}_f$, which introduces an inevitable trade-off.
This explanation is supported by the great retention performance of \sysname reported in \autoref{subsec: overall performance}. 

\mypara{Robustness Against Prompt Attack}
We assess defensive robustness against the knowledge recovery attack, with the results shown in \autoref{fig:Attack}b.
Accuracy reflects the ability to answer questions related to the target knowledge.
For RMU, a significant gap can be observed between ``w/o Attack'' and ``w/ Attack.''
This means FOCUSONKEY successfully recovers knowledge, causing a sharp increase in accuracy that even surpasses that of ``Retrain.''
Conversely, \sysname exhibits remarkable robustness. 
Even under attack, it maintains accuracy below that of the ``Retrain'' baseline, especially on NEWS. 
These results indicate that \sysname effectively resists knowledge recovery, substantially limiting the reconstruction of removed knowledge even under specialized attacks.

\begin{figure}[!t]
    \centering
    \includegraphics[width=1\hsize]{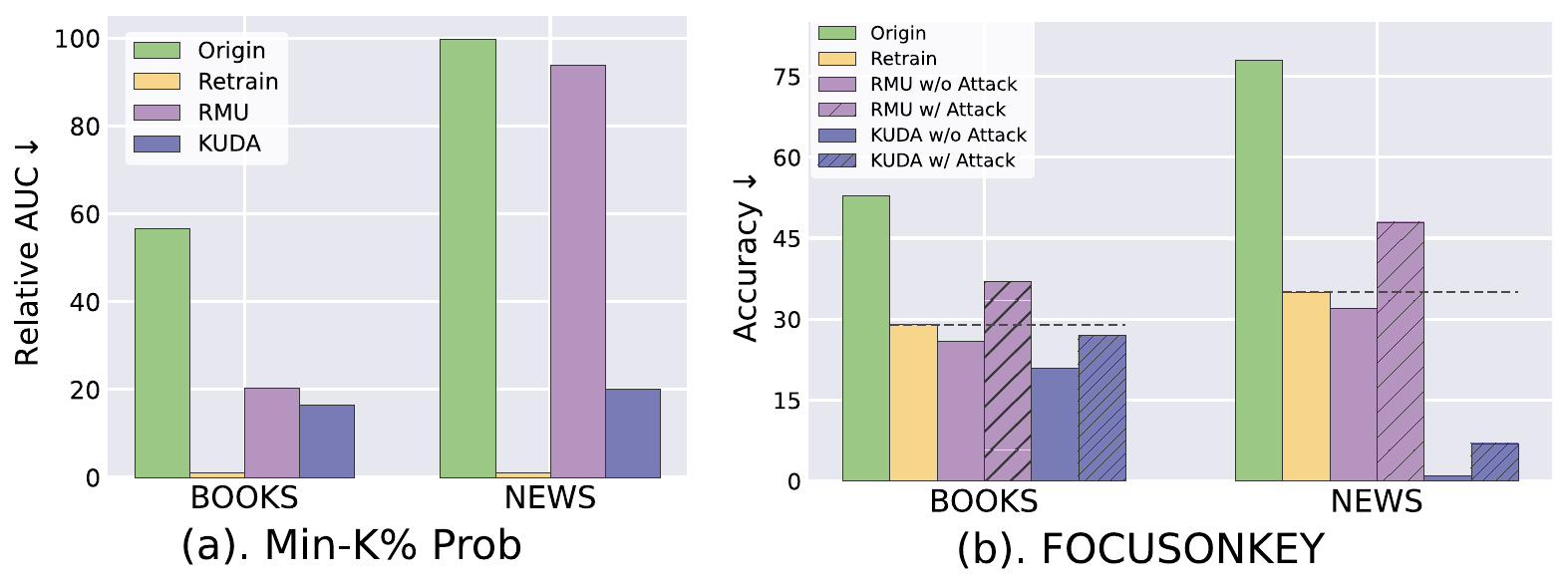}
    \caption{Unlearning performance against adversarial attacks. 
    We select the MUSE benchmark, including BOOKS and NEWS.
    Lower values indicate a more thorough forgetting effect and enhanced defensive robustness.
    }
    \label{fig:Attack}
\end{figure}

\mypara{Robustness Against Quantization}
We apply 4-bit and 8-bit quantization to the models unlearned in BF16 following Zhang \etal~\cite{zhang2025quantization}.
The results show that \sysname remains robust under low-bit quantization, with no evident knowledge recovery.
Due to space limitations, please refer to Appendix~\ref{subsec: quantization} for detailed experiment settings, results, and analysis.

\begin{table}[!t]
\centering
\footnotesize 
\setlength{\tabcolsep}{2pt} 
\renewcommand{\arraystretch}{0.8}
\caption{
The detailed overhead of \sysname in \texttt{HH:MM:SS} format, with 4 unlearning layers.
The cost of parameter update is 1 epoch, and the update without projection is also reported.
}
\label{tab:params}
\begin{tabular}{ccccccc}
\toprule
\textbf{Dataset} & \textbf{Model} & \makecell{\textbf{Causal}\\\textbf{Tracing}} & \makecell{\textbf{Input}\\\textbf{Feature}} & \textbf{SVD} & \textbf{Update} & \makecell{\textbf{Update w/o}\\\textbf{Projection}} \\ \midrule
    \rowcolor[RGB]{227,234,241}
    \cellcolor{white}
\multirow{4.2}{*}{WMDP} 
    & Zephry-Beta     & 2:10:23     & 0:02:40    & 0:09:40 & 0:05:57      & 0:04:28 \\ \cmidrule{2-7} 
    & Llama-3.1  & 2:57:00 & 0:04:16 & 0:09:33          & 0:08:21 & 0:06:25 \\ \cmidrule{2-7} 
    \rowcolor[RGB]{227,234,241}
    \cellcolor{white}
    & Qwen-3      & 2:41:45 & 0:02:32 & 0:18:32          & 0:08:17 & 0:06:34 \\ \midrule
\multirow{2.2}{*}{MUSE} & ICLM (BOOKS)   & 2:21:06 & 0:00:32 & 0:04:08 & 0:12:23  & 0:11:18 \\ \cmidrule{2-7} 
    \rowcolor[RGB]{227,234,241}
    \cellcolor{white}
    & Llama-2 (NEWS) & 1:50:24 & 0:03:04 & 0:04:12 & 0:09:05 & 0:08:25 \\ \bottomrule
\label{tab:computational cost}
\end{tabular}
\end{table}

\subsection{Efficiency Analysis}
\label{subsec: efficiency}
The practicality of \sysname also depends on the computational efficiency.
In this section, we analyze its computational overhead and evaluate the proposed sampling strategy for reducing the cost of input feature capture on large-scale datasets.

\mypara{Computational Overhead}
\autoref{tab:computational cost} presents the computational overhead of \sysname, identifying \textit{causal tracing} as the primary bottleneck. 
However, as discussed in \autoref{subsec: Layers selection}, this is a \textit{one-time} offline operation for each model, and the generated layer-selection results are reusable across datasets. 
The cost of SVD is primarily determined by the dimensionality of the input features, but remains acceptable when restricted to a few unlearning layers.
Moreover, the additional overhead introduced by the projection during parameter updates is marginal across models and unlearning tasks.
We further demonstrate efficiency comparable to that of the baselines, as reported in Appendix~\ref{subsec: efficiency comparison}.

In general, \sysname introduces acceptable additional overhead, which remains stable and does not scale significantly with the size or complexity of the unlearning tasks.
However, the cost of ``Input Feature Capture'' scales linearly with the size of $\mathcal D_r$, leaving an issue we address specifically in the following.

\begin{figure}[!t]
    \centering
    \includegraphics[width=1\hsize]{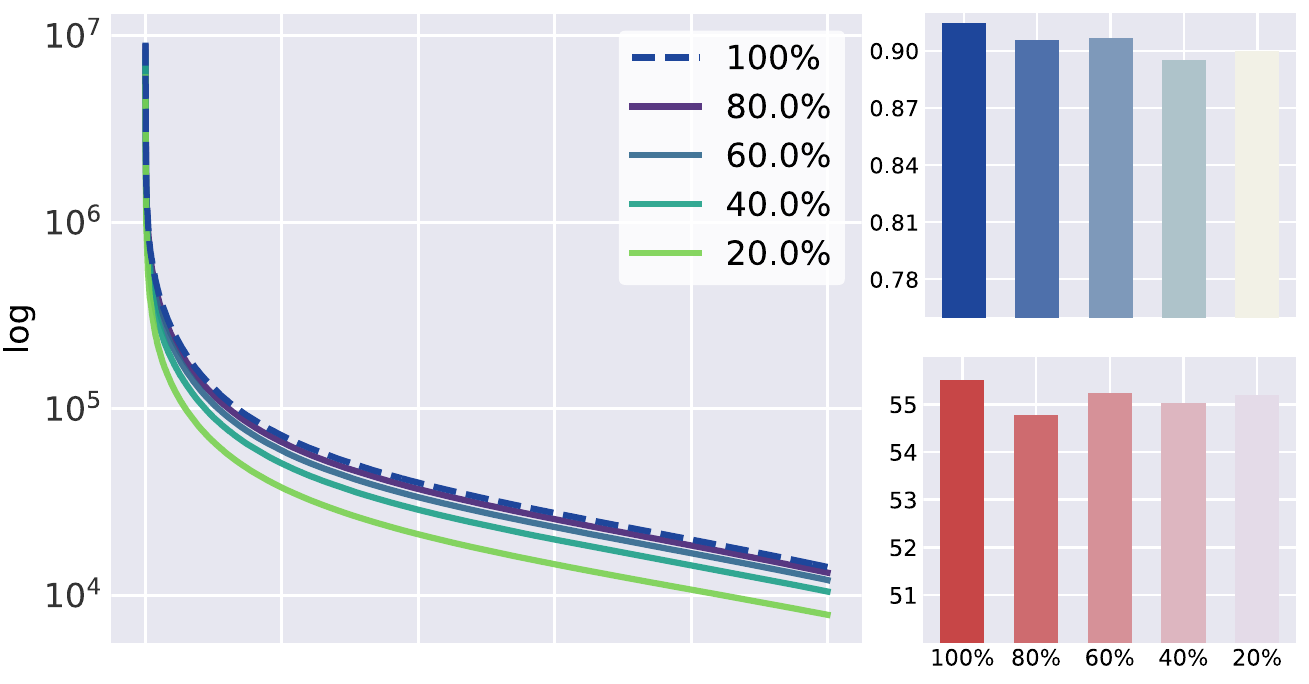}
    \caption{Impact of sampling ratio. 
    The left plot shows eigendecomposition, in descending order on a logarithmic scale.
    The right parts present forgetting quality (top) and retaining utility (bottom), under different sampling ratios and tuning $\tau$.
    }
    \label{fig:efficiency sample}
\end{figure}

\mypara{Sampling Strategy}
To mitigate the computational bottleneck caused by ``Input Feature Capture'' under a large-scale dataset, we propose \textit{stochastic proportional sampling} in \autoref{subsec: null-space projection} to construct an approximate input feature space using a sampled subset of retaining data $\hat{\mathcal{D}}_r$.
A critical consideration is whether such a sampling strategy preserves the information integrity of input features, or if it leads to a degradation in the effectiveness of null-space due to potential information loss.

We conduct a comprehensive analysis with different sampling ratios $p$ and a large-scale corpus with 61,000 samples, following basic unlearning evaluations and spectral decomposition analysis on the input feature spaces.
As presented in \autoref{fig:efficiency sample}, the high consistency of eigenvalue spectra confirms that stochastic sampling captures the primary information and core distribution of the original representation space.
Since relaxed null-space projection primarily targets preserving these dominant features, the inevitable minor information loss only induces variations in the construction of the null-space, without compromising its protection function.
The right plots of \autoref{fig:efficiency sample} confirm that near-optimal unlearning results can be achieved across ratios $p$ through minor adjustments of $\tau$, while reducing the overhead to $\frac{p}{100}$ of the original one, thus balancing computational efficiency with the unlearning effectiveness.

\section{Related Work}
\subsection{LLM Unlearning}
\label{subsec: LLM Unlearning Implementation}
LLM unlearning is an extension of machine unlearning~\cite{chen2021machine,chen2022graph,machineunlearning, chen2024machine} that is specifically applied to LLMs, to remove undesired knowledge from a pre-trained LLM without full retraining~\cite{liu2025rethinking, ren2025sokmachineunlearninglarge}. 
Representative methods for LLM unlearning primarily rely on parameter fine-tuning. 
This includes full parameter updates based on specific forgetting losses~\cite{jang-etal-2023-knowledge,zhangnegative,fan2024simplicity}, which often incorporate retaining losses to balance forgetting and retention~\cite{yao2024large}, serving as the foundational approach. 
To pursue higher efficiency and balance, some studies focus on identifying parameters correlated to target knowledge~\cite{wu-etal-2023-depn,WALGE, pochinkov2024dissectinglanguagemodelsmachine}. 
However, these are computationally expensive, restricting practical applicability.
Besides, researchers explore partial parameter updates based on activation steering~\cite{li2024wmdp,shen2025lunar}, but the effectiveness of forgetting is limited~\cite{fan2025llmunlearningresilientrelearning,yang2025faithun}.
More related works are deferred to Appendix \ref{sec: More Related Works of LLM Unlearning}.

\subsection{Model Editing}
\label{subsec: Model Editing}
Model editing aims to modify LLMs' knowledge by redirecting inputs to specific outputs, often for factual associations that can be formulated as triples $\langle subject, relation, object \rangle$, where the task is to replace the original ``object'' with a given new target~\cite{modeleditingsurvey,editingsurvey}.
Among existing approaches, ``Locate-and-Edit'' has emerged as a mainstream paradigm~\cite{modeleditingsurvey}.
The representative method includes ROME~\cite{meng2022locating}, MEMIT~\cite{mengmass}, EMMET~\cite{gupta2024emmet}, and AlphaEdit~\cite{fangalphaedit}.
Despite the similar technical intuition, \sysname is distinct from model editing.
Model editing redirects knowledge mappings to the given outputs, while LLM unlearning focuses on removing such mappings. 
This lack of substitute targets makes LLM unlearning confront severe utility degradation.
Besides, model editing typically focuses on facts with relative locality. 
LLM unlearning, however, addresses broader concepts or data traces leading to knowledge entanglement. 
Moreover, model editing operates on structured factual triples, while LLM unlearning may target paragraphs without such structure. 
These gaps hinder the direct transfer of model editing algorithms to LLM unlearning scenarios.

\section{Conclusion}
In this paper, we focus on the LLMs' property of knowledge storage for effective unlearning. 
We propose \sysname, a representation-based method that enables targeted parameter updates for knowledge removal with minimal degradation on model utility.
\sysname utilizes a component-level identification method with a sliding window strategy to select unlearning layers, introduces a representation deviation mechanism to remove target knowledge, and employs the relaxed null-space projection to preserve the representations of retaining knowledge.
Besides, we propose a two-stage tuning strategy to decouple the joint hyperparameter searches.
Experimental results demonstrate that \sysname outperforms most representative baselines in balancing both knowledge removal and utility retention, and we further reveal the underlying mechanism.

\section*{Acknowledgments}
This project was supported by the National Science and Technology Major Project (No. 2026ZD0128300), the National Natural Science Foundation of China (No. U23A20296, 62441618, 62402431), Zhejiang Province's "Lingyan" R\&D Project (No. 2025C01195), Zhejiang Provincial Natural Science Foundation of China (No. LZ26F020002), Natural Science Foundation of Jiangsu Province (No. BK20220075), and the project CiCS of the research programme Gravitation which is (partly) financed by the Dutch Research Council (NWO) under Grant No. 024.006.037.


\bibliographystyle{plain}
\bibliography{sample-base.bib}

\appendices

\section{More Related Work of LLM Unlearning}
\label{sec: More Related Works of LLM Unlearning}
\subsection{LLM Unlearning via External Intervention}
\label{subsec: LLM Unlearning via External Intervention}
In contrast to parameter modifications, the \textit{external intervention} methods introduce external components to interfere with the generation process, such as context fine-tuning~\cite{shi2023context}, or by adding perturbations to input embeddings~\cite{ECO} or output logits~\cite{ULD}.
Therefore, these methods control the model to exhibit ``unlearning'' behavior without updating the model parameters.
However, these only achieve unlearning at the behavioral level, failing to remove the target knowledge encoded within the model's parameters.
This renders such methods vulnerable to the risk of target knowledge being extracted from the model’s intermediate representations~\cite{dong2025interninverse}.
These approaches are excluded from our discussion and evaluation because they do not fulfill the fundamental requirement of LLM unlearning, specifically the removal of targeted knowledge. 
Therefore, they are considered distinct from the scope of our research.

\subsection{Attacks against LLM Unlearning}
\label{subsec: LLM Unlearning Attacks}
The robustness of an unlearned LLM against adversarial attacks is also a critical aspect for ensuring practical applicability~\cite{liu2025rethinking}.
On one hand, existing attacks aim to extract forgotten knowledge by exploiting residual memory through adversarial attacks~\cite{zhang2025focusonkey} or backdoor triggers~\cite{huutien2025improvingllmunlearningrobustness}.
On the other hand, the adversary recovers the forgotten knowledge through specific prompts~\cite{zhang2025focusonkey} that simply emphasize certain keywords in the prompts. 
Notably, even without access to the forgetting set, the supposedly removed information can be restored by fine-tuning on retaining sets~\cite{hu2024jogging} or quantizing the models to low precision~\cite {zhang2025quantization}.
These reveal the vulnerability of existing knowledge-removal-intended methods, indicating that they tend to obscure knowledge rather than remove it~\cite{ren2025sokmachineunlearninglarge}. 

\subsection{Application of LLM Unlearning}
\label{subsec: LLM Unlearning Application}
Explorations across diverse application domains have revealed a broad potential of LLM unlearning, including the removal of privacy-sensitive training data encoded in LLMs~\cite{jang-etal-2023-knowledge, wu-etal-2023-depn,codeunlearning}, the suppression of harmful content generation~\cite{li2024wmdp}, and the enforcement of copyright compliance~\cite{DouLLDW25copyright}.
Furthermore, related research has extended its scope to address training data contamination, encompassing issues such as noisy data~\cite{wang2025llm}, mislabeled content~\cite{shilov2025beyond}, and backdoor attacks~\cite{jiang2025backdoor}.

\section{Proof}
\label{sec: proof}
We first restate the definition of null-space as follows:
\begin{definition}[Null-space]
\label{def: null-space}
For a matrix $A\in \mathbb{R}^{m\times n}$, its null-space is defined as
\begin{equation}
\mathrm{Null}(A)=\{v\in\mathbb{R}^{n}\mid Av=0\}
\nonumber
\end{equation}
That is, the null-space $\mathrm{Null}(A)$ contains all vectors that are mapped to zero by $A$.
\end{definition}

For the model layer $l$, let $x^l \in \mathbb{R}^{1 \times d_1}$ denote the input feature of the sample $x$ at layer $l$, \ie, the output of layer $l-1$ and the input to layer $l$. 
In our setting, the input feature space of $x^l$ is defined as
\begin{equation}
    X^l=(x^l)^\top x^l \in \mathbb{R}^{{d_1}\times {d_1}}.
    \nonumber
\end{equation}

Let $\Delta w^l \in \mathbb{R}^{d_1 \times d_2}$ denote the parameter update at layer $l$.
We write it column-wise as
\begin{equation}
\Delta w^l = [\delta_1,\delta_2,\dots,\delta_{d_2}], \qquad \delta_j \in \mathbb{R}^{d_1}.
\nonumber
\end{equation}
When we say that \textit{$\Delta w^l$ lies in the null space of $X^l$}, this means that each column of $\Delta w^l$ belongs to $\mathrm{Null}(X^l)$, namely
\begin{equation}
\delta_j \in \mathrm{Null}(X^l), \qquad \forall j=1,\dots,d_2.
\nonumber
\end{equation}

\begin{theorem}
When the parameter gradient $\Delta w^l$ lies in the null-space of the input feature space $X^l$, there is:
\begin{equation}
    x^l \cdot \Delta w^l = 0
    \nonumber
\end{equation}
\end{theorem}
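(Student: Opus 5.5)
The plan is to reduce the matrix statement to a scalar identity for each column of $\Delta w^l$ and then use the rank-one structure of $X^l=(x^l)^\top x^l$. Since $x^l\in\mathbb{R}^{1\times d_1}$ is a row vector, $x^l\cdot\Delta w^l$ is a row vector in $\mathbb{R}^{1\times d_2}$. Its $j$-th entry is the scalar $x^l\delta_j$, where $\delta_j$ is the $j$-th column in the decomposition $\Delta w^l=[\delta_1,\dots,\delta_{d_2}]$ given above. It therefore suffices to show that $x^l\delta_j=0$ for every $j$.

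Fix $j$. By the hypothesis and Definition~\ref{def: null-space}, $\delta_j\in\mathrm{Null}(X^l)$, which means $X^l\delta_j=(x^l)^\top (x^l\delta_j)=0$. The key step is to turn this vector equation into a scalar one. I will left-multiply by $\delta_j^\top$ to obtain
\begin{equation}
0=\delta_j^\top (x^l)^\top x^l\delta_j=(x^l\delta_j)^\top(x^l\delta_j)=(x^l\delta_j)^2 .
\nonumber
\end{equation}
Because $x^l\delta_j$ is a real scalar, its square vanishes only if $x^l\delta_j=0$. In other words, $\mathrm{Null}((x^l)^\top x^l)=\mathrm{Null}(x^l)$, the standard Gram-matrix fact. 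An alternative route, which avoids the quadratic form, is this: if $x^l\neq 0$, then $(x^l)^\top c=0$ forces the scalar $c=x^l\delta_j$ to be zero, because some coordinate of $x^l$ is nonzero; if $x^l=0$, the claim is trivial.

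Collecting the columns then gives $x^l\cdot\Delta w^l=[x^l\delta_1,\dots,x^l\delta_{d_2}]=0$. For the main text's usage I would add a remark. When $X^l$ is replaced by the averaged covariance $\mathcal{K}^l_r=\frac1{n_r}\sum_i (k^l_{r,i})^\top k^l_{r,i}$, the same quadratic-form argument gives $\delta^\top\mathcal{K}^l_r\delta=\frac1{n_r}\sum_i (k^l_{r,i}\delta)^2=0$, so each term vanishes. This means the invariance holds simultaneously for every retained sample.

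I expect no real obstacle. The only delicate point is bookkeeping: keeping the row-vector and column-vector conventions and the meaning of ``$\Delta w^l$ lies in the null-space'' (column-wise) consistent, so that the products are well defined and the scalar squaring step is legitimate over $\mathbb{R}$.
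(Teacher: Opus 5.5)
Your proposal is correct and follows essentially the same route as the paper: reduce to the columns $\delta_j\in\mathrm{Null}(X^l)$, evaluate the quadratic form $\delta_j^\top X^l\delta_j=\|x^l\delta_j\|_2^2=0$, and restack the columns to get $x^l\Delta w^l=0$. Your additional remark is also correct and useful, and the paper's single-sample statement does not spell it out: applying the same quadratic-form argument to the averaged covariance $\mathcal{K}^l_r$ forces $k^l_{r,i}\delta=0$ simultaneously for every retained sample.
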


\begin{proof}
According to \autoref{def: null-space}, this is equivalent to
\begin{equation}
X^l \delta_j = 0, \qquad \forall j=1,\dots,d_2.
\nonumber
\end{equation}
Since $X^l = (x^l)^\top x^l$, we have
\begin{equation}
\begin{aligned}
0=
\delta_j^\top X^l \delta_j
= 
\delta_j^\top (x^l)^\top x^l \delta_j 
=
\|x^l \delta_j\|_2^2 
\nonumber
\end{aligned}
\end{equation}
Therefore, $X^l\delta_j = 0$ implies
\begin{equation}
x^l \delta_j = 0.
\nonumber
\end{equation}
Stacking the columns back together, we obtain
\begin{equation}
\begin{aligned}
x^l \cdot \Delta w^l
&=
x^l \cdot [\delta_1, \delta_2, \dots, \delta_{d_2}] \\
&=
[x^l  \delta_1, x^l \delta_2, \dots, x^l \delta_{d_2}]\\
&=
0
\nonumber
\end{aligned}
\end{equation}
which is equivalent to
\begin{equation}
x^l \Delta w^l = 0
\nonumber
\end{equation}
Hence, the parameter update does not alter the layer-$l$ output on previous-task features.

\end{proof}

\section{Experimental Details}
\subsection{Evaluation Metrics}
\label{subsec: Metrics}

\mypara{MUSE~\cite{shi2024muse}}
It mainly leverages the text-generation capacity of LLM and designs three complementary metrics to holistically quantify the unlearning performance.
\begin{itemize}
    \item \textbf{Verbatim Memorization (VerbMem)}
        This metric quantifies the verbatim reproduction of contexts from data in the forgetting set. 
        We evaluate this by inputting the LLM with the initial tokens $x_{<t}$ of a text sequence $x \in \mathcal D_f$, and comparing the similarity between the generated continuation and the ground-truth one $x_{[t:end]}$:
        \begin{equation}
            \mathrm{VerbMem}(\mathcal{M},\mathcal D_f) = \frac{1}{n} \sum_{x \in \mathcal D_f} \mathrm{ROUGE}(\mathcal{M}(x_{<t}),x_{[t:end]}),
            \nonumber
        \end{equation}
        where $n$ is the sample size of $\mathcal D_f$, $\mathrm{ROUGE}$ denotes the ROUGE-L F1 score.

    \item \textbf{Knowledge Memorization (KnowMem)}
        This metric evaluates the model's ability to utilize target knowledge across various generation tasks through the format of knowledge-based question-answering. 
        The target knowledge can originate from either the forgetting set or the retaining set.
        Specifically, prompt the LLM with questions $Q$, and compare the similarity between its output and the ground-truth answer $A$:
        \begin{equation}
            \mathrm{KnowMem}(\mathcal{M},\mathcal{D}) = \frac{1}{|\mathcal{D}|} \sum_{(Q,A)) \in \mathcal{D}} \mathrm{ROUGE}(\mathcal{M}(Q),A),
            \nonumber
        \end{equation}
        where $\mathcal{D}$ can be $\mathcal D_f$ or $\mathcal D_r$. 
        Notably, $(Q,A)$ pairs are specifically designed based on text excerpts of $\mathcal{D}$, thereby reflecting the LLM's accurate comprehension of the questions and its ability to appropriately utilize knowledge. 

    \item \textbf{Privacy Leakage (PrivLeak)}
        This metric quantifies the privacy leakage degree of the unlearned model by measuring its defense to Membership Inference Attacks (MIA), a common privacy attack. 
        It assesses whether the model inadvertently reveals the membership that $x \in \mathcal D_f$ was used for training. 
        Specifically, apply a Min-K\% Prob~\cite{shi2024detecting} MIA method to the model and compute the corresponding AUC-ROC score of discriminating members datasets and non-members datasets.
        Based on this, PrivLeak is defined by comparing with the ``retrain'' model:
        \begin{equation}
            \mathrm{PrivLeak} = \left| \frac{
                \mathrm{AUC}(\mathcal{M}_{un};\mathcal D_f, \mathcal D_{h}) - \mathrm{AUC}(\mathcal{M}_{re};\mathcal D_f, \mathcal D_{h})
            }{
            \mathrm{AUC}(\mathcal{M}_{re};\mathcal D_f, \mathcal D_{h})
            } \right|,
            \nonumber
        \end{equation}
        where $\mathcal D_h$ is the hold-out (non-member) set which the model has never been trained on, $\mathcal M_{un}$ and $\mathcal M_{re}$ denote the unlearned model and the retrained model, respectively.
    
    \item \textbf{Knowledge Removal Degree (KRD)}
    This metric provides a holistic evaluation of forgetting quality: when any forgetting metric reflects poor performance of knowledge removal, KRD approaches 0; 1 indicates complete forgetting.
    The harmonic mean achieves this effectively by averaging the reciprocals.
    Moreover, the disruption caused by excessive forgetting should be eliminated in the computation process. 
    For instance, when a metric yields a value better than that of the ``retrain'' model, its forgetting quality should be considered equivalent to that of the ``retrain'' model, rather than superior.
    For MUSE, we use the metric value of the ``retrain'' model as an upper limit for truncation.
    Formally, KRD is defined as: 
    \begin{gather}
    \begin{aligned}
        \mathcal{FQ}^1_{M} &= \frac{\mathrm{VM}(\mathcal M_o) - \max\{\mathrm{VM}(\mathcal M_f), \mathrm{VM}(\mathcal M_r)\}}{\mathrm{VM}(\mathcal M_o) - \mathrm{VM}(\mathcal M_r)} \\[1pt]
        \mathcal{FQ}^2_{M} &= \frac{\mathrm{KM}(\mathcal M_o) - \max\{\mathrm{KM}(\mathcal M_f), \mathrm{KM}(\mathcal M_r)\}}{\mathrm{KM}(\mathcal M_o) - \mathrm{KM}(\mathcal M_r)} \\[1pt]
        \mathcal{FQ}^3_{M} &= \frac{ \mathrm{PL}(\mathcal M_o) - \max\{\mathrm{PL}(\mathcal M_f), \mathrm{PL}(\mathcal M_r) \}}{\mathrm{PL}(\mathcal M_o) - \mathrm{PL}(\mathcal M_r)}
        \nonumber
    \end{aligned} \\[1pt]
        \mathrm{KRD}_{\text{MUSE}} = \frac{3}{\sum_{i=1}^{3} \frac{1}{\mathcal{FQ}^i_{M}}}, \nonumber
    \end{gather}
    where $\mathrm{VM}$, $\mathrm{KM}$ and $\mathrm{PL}$ represent VerbMem, KnowMem on $\mathcal{D}_f$, and PrivLeak, respectively.
    Similarly, $\mathcal M_o$, $\mathcal M_r$ and $\mathcal M_f$ are ``origin'', ``retrain'' and unlearned models.
\end{itemize}

\mypara{WMDP~\cite{li2024wmdp}}
It assesses the LLM's comprehension of specific knowledge through a multiple-choice format.
\begin{itemize}
    \item \textbf{Answer Accuracy (Acc)}
    WMDP evaluates the model's comprehension of hazardous knowledge by measuring its accuracy in answering multiple-choice questions. 
    It comprises a large-scale collection of questions across different domains of harmful information.
    During evaluation, the model is prompted with a question and four candidate options.
    The overall accuracy is then computed based on the model's responses:
    \begin{equation}
            \mathrm{Acc} =\frac{
                \sum \mathds{1}(\mathcal{M}(Q)=A)
            }{N_{total}},
            \nonumber
        \end{equation}
    where $\mathds{1}$ is the indicator function, and $N_{total}$ is the number of correctly answered questions and total questions.

    \item \textbf{Knowledge Removal Degree (KRD)}
    The KRD is consistent with that of MUSE.
    However, due to the absence of the ``retrain'' model (the reason is detailed \autoref{subsec:experimental setup}), we truncate the metric values to 25, which is the desired test score for forgetting, indicating that the model answers randomly. 
    Therefore, KRD for WMDP is formulated as:
    \begin{gather}
    \begin{aligned}
        \mathcal{FQ}^1_{W} &= \frac{\mathrm{Cyber}(\mathcal M_o) - \max\{\mathrm{Cyber}(\mathcal M_f), 25\}}{\mathrm{Cyber}(\mathcal M_o) - 25} \\[1pt]
        \mathcal{FQ}^2_{W} &= \frac{\mathrm{Bio}(\mathcal M_o) - \max\{\mathrm{Bio}(\mathcal M_f), 25\}}{\mathrm{Bio}(\mathcal M_o) - 25} 
        \nonumber
    \end{aligned} \\[1pt]
        \mathrm{KRD}_{\text{WMDP}} = \frac{2}{\sum_{i=1}^{2} \frac{1}{\mathcal{FQ}^i_{W}}}, \nonumber
    \end{gather}
    where $\mathrm{Cyber}$ and $\mathrm{Bio}$ represent answer accuracy in test sets related to Cyber and Bio, respectively.
    Besides, $\mathcal M_o$ and $\mathcal M_f$ denote ``origin'' and unlearned models.

    \item \textbf{MMLU Scores (MMLU~\cite{hendrycks2020measuring})}
    WMDP adopts MMLU to evaluate a LLM’s general knowledge that should be retained after unlearning.
    MMLU benchmark provides a comprehensive evaluation of a LLM’s knowledge. 
    It comprises 57 subjects spanning a wide range of fields, covering basic mathematics, U.S. history, computer science, etc. 
    With a difficulty spectrum from elementary to advanced, MMLU is suitable for testing models at different levels of proficiency. 
    Similarly, MMLU employs a multiple-choice format consistent with the WMDP test sets, and its scoring metric is based on the overall proportion of questions the model answers correctly across all subjects.
\end{itemize}

\subsection{Baseline Implementation}
\label{subsec: Baselines Implementation}
\mypara{MUSE}
Both GA and NPO are implemented with GradDiff and employ the optimal retaining loss following Shi \etal~\cite{shi2024muse} (KL divergence for BOOKS, and Cross-Entropy for NEWS).
Likewise, SimNPO is paired with the corresponding retaining loss according to Fan \etal~\cite{fan2024simplicity}. 
The implementation of WHP directly uses the code provided by Shi \etal~\cite{shi2024muse}.
For RMU on MUSE, due to the absence of an open-source implementation, we adopt the MUSE codebase to reproduce RMU.

\mypara{WMDP}
We implemented GA, NPO, and SimNPO following the settings of Fan \etal~\cite{fan2024simplicity}.

For all baselines, we adopt the hyperparameter configurations reported in the benchmarks or their respective papers to get the optimal performance.
Regarding the implementation of RMU in the MUSE benchmark, we perform a grid search to identify the optimal hyperparameter configuration.

\begin{table}[!t]
\caption{The detailed hyperparameter configuration of \sysname for implementation across different datasets and models.
Within the 'Epoch' column, the notation $i-j$ means that the model is trained with a total of $i$ epochs, with the $j$-th checkpoint selected as the optimal performance for our evaluation.
Others default to the final model.
}
\label{tab:Implementations}
\setlength{\tabcolsep}{0.4em}
\renewcommand{\arraystretch}{1.2}
\footnotesize
\centering
\begin{tabular}{cccccc}
\toprule[1pt]
\textbf{Dataset} & \textbf{Model} & $\boldsymbol{\beta}$ & $\boldsymbol{\tau}$ & \textbf{Learning Rate} & \textbf{Epoch} \\ \midrule
    \rowcolor[RGB]{227,234,241}
    \cellcolor{white}
\multirow{4.2}{*}{\textbf{WMDP}}
    & Zephry-7B-Beta     & $2.0$ & $2\times10^{-3}$ & $1\times10^{-5}$ & $1$ \\ \cmidrule{2-6}
    & Llama-3.1-8B  & $0.1$ & $3\times10^{-3}$ & $1\times10^{-5}$ & $1$ \\ \cmidrule{2-6}
    \rowcolor[RGB]{227,234,241}
    \cellcolor{white}    
    & Qwen3-8B      & $1.0$ & $4\times10^{-3}$ & $1\times10^{-4}$ & $1$ \\ \midrule
\textbf{MUSE-BOOKS} 
    & ICLM-7B       & $0.1$ & $8 \times 10^{-2}$ & $1 \times 10^{-5}$ & $5$ \\ \midrule
    \rowcolor[RGB]{227,234,241}
    \cellcolor{white}
\textbf{MUSE-NEWS}  
    & Llama-2-7B    & 2.0 & $4 \times 10^{-3}$ & $5 \times 10^{-5}$ & $3-2$ \\ \bottomrule[1pt]
\end{tabular}
\end{table}

\subsection{\sysname Implementation}
\label{subsec: Implementations}
All experiments are conducted on a server equipped with two NVIDIA L40 (48GB) GPUs and 512GB of system RAM. 
During the training phase, we employ a batch size of 4. 
Due to the memory requirement of null-space projection matrices and the constraints of limited GPUs' memory, we implement a layer-wise update mechanism that necessitates frequent matrix transfers between CPU and GPU memory. 
Although this enables execution under hardware constraints, the resulting I/O overhead introduces an extra performance delay. 
Consequently, the operational efficiency of \sysname is scalable, suggesting that the execution speed could be further optimized in environments with sufficient GPU memory to accommodate all matrices locally.

In \autoref{subsec: overall performance}, the process of hyperparameter selection adheres to our proposed two-stage tuning strategy to facilitate efficient configuration.
By transitioning from a purely empirical search to this structured pipeline, we significantly reduce the search space complexity while ensuring that the selected parameters achieve a balance between forgetting and retention.
The detailed hyperparameter configurations are presented in \autoref{tab:Implementations}.
Besides, we set $N=10$ and $s=4$ for layer selection.
Due to the small scale of datasets used in \autoref{subsec: overall performance}, we set the sampling ratio $p=100$.

Notably, it is observed that identical hyperparameter configurations can lead to different unlearning outcomes across different hardware environments, such as GPU architectures. 
However, the fundamental trends of hyperparameter sensitivity remain consistent with our analysis in \autoref{subsec: hyperparameter}.
This suggests that suitable hyperparameters across different hardware devices can be efficiently identified using the two-stage tuning strategy.

\section{Additional Evaluation on TOFU}
\label{sec: tofu}

\subsection{Experimental Setup}
TOFU~\cite{mainitofu} is a benchmark designed to evaluate LLM unlearning on question-answering about fictitious authors' personal information. 
We adopt TOFU as a complementary benchmark to further examine the generalizability of KUDA to knowledge of personal information. 
In this section, we introduce the dataset, models, and evaluation metrics, followed by the experimental results and analysis.

\mypara{Dataset}
TOFU contains 200 author profiles synthesized by GPT-4, with 20 question-answer pairs for each author.
TOFU provides three forgetting settings, where 1\%, 5\%, and 10\% of the authors constitute the forgetting set $\mathcal{D}_f$, and the remaining authors constitute the corresponding retaining set $\mathcal{D}_r$.

Forgetting quality is evaluated on $\mathcal{D}_f$, which contains QA data related to the unlearning targets. 
Retaining utility is evaluated on $\mathcal{D}_r$, and two auxiliary datasets, Real Authors $\mathcal{D}_{RA}$ and World Facts $\mathcal{D}_{WF}$.
Specifically, $\mathcal{D}_r$ measures the preservation of knowledge about the remaining fictitious authors, while $\mathcal{D}_{RA}$ and $\mathcal{D}_{WF}$ assess the retention of knowledge about neighboring real authors and general world facts, respectively.

\mypara{Model}
Since the profiles are entirely synthetic, the corresponding knowledge is absent from the pre-training data.
TOFU therefore constructs the ``origin'' model by fine-tuning a pre-trained LLM on the full author profiles QA dataset, such that the model learns knowledge about all fictitious authors.
The ``retrain'' model is obtained by fine-tuning the same pre-trained LLM on $\mathcal{D}_r$, without exposure to $\mathcal{D}_f$, and thus serves as the golden baseline for forgetting.

\mypara{Metric}
We adopt the comprehensive evaluation framework proposed by Yuan \etal~\cite{yuan2025closer}, which extends the original TOFU evaluation to assess model outputs from multiple dimensions, such as semantic consistency, factual correctness, etc.
\begin{itemize}
    \item \textbf{ROUGE (R).}
    This measures the word-level overlap between the generated response and the ground-truth answer using ROUGE-L recall:
    \begin{equation}
        \mathrm{R}
        =
        \frac{1}{|\mathcal{D}|}
        \sum_{(x,y)\in\mathcal{D}}
        \mathrm{ROUGE\text{-}L}(\mathcal{M}(x),y),
        \nonumber
    \end{equation}
    where $\mathcal{M}(x)$ is the output answer generated by the model $\mathcal{M}$.
    A higher R indicates more lexical overlap with the ground-truth answer.

    \item \textbf{Probability (P).}
    This measures the model’s ability to predict the ground-truth answer:
    \begin{equation}
        \mathrm{P}
        =
        \frac{1}{|\mathcal{D}|}
        \sum_{(x,y)\in\mathcal{D}}
        \frac{1}{T_y}
        \sum_{t=1}^{T_y}
        p_{\mathcal{M}}(y_t\mid x, y_{<t}),
        \nonumber
    \end{equation}
    where $T_y$ is the token length of $y$.
    A higher P indicates greater confidence in the ground-truth answer.
    
    \item \textbf{Truth Ratio (TR).}
    This measures the model's relative preference between a correct answer $\tilde{y}$ and incorrect ones $\hat{y}$:
    \begin{equation}
        r(x;\mathcal{M})
        =
        \frac{1}{|\hat{\mathcal{Y}}|}
        \sum_{\hat{y}\in\hat{\mathcal{Y}}}
        \frac{\mathrm{P}(\hat{y}\mid x;\mathcal{M})}
        {\mathrm{P}(\tilde{y}\mid x;\mathcal{M})}.
        \nonumber
    \end{equation}
    For forgetting, successful unlearning requires the model to be unable to distinguish the correct answer from the incorrect ones, \ie, $r\approx1$. 
    For retention, the model should retain a strong preference for the correct answers across $\mathbb{D}_u=\{\mathcal{D}_r,\mathcal{D}_{\mathrm{RA}},\mathcal{D}_{\mathrm{WF}}\}$, \ie, $r \rightarrow 0$. 
    Therefore, TR is defined as:
    \begin{equation}
        \mathrm{TR}
        =
        \begin{cases}
            1-\min\left\{r,\dfrac{1}{r}\right\},
            & \mathcal{D}=\mathcal{D}_f, \\
            \max\{0,1-r\},
            & \mathcal{D}\in\mathbb{D}_u.
        \end{cases}
        \nonumber
    \end{equation}
    A lower TR indicates better forgetting on $\mathcal{D}_f$, whereas a higher TR indicates better knowledge retention on $\mathbb{D}_u$.

    \item \textbf{Token Entropy (TE).}
    This measures the token diversity and readability of generated responses:
    \begin{equation}
        \mathrm{TE}
        =
        -\frac{\sum_{i=1}^{m} f(w_i)\log_2 f(w_i)}
        {\log_2 |\mathcal{M}(x)|},
        \nonumber
    \end{equation}
    where $m$ is the number of unique tokens and $f(w_i)$ is the normalized frequency of the unique token $w_i$. 
    A higher value indicates less repetitive tokens.

    \item \textbf{Cosine Similarity (CS).}
    This measures the semantic similarity between responses generated before and after unlearning:
    \begin{equation}
        \mathrm{CS}
        =
        \max \left \{
        \cos \big \langle
        E \left(\mathcal{M}_o(x)\right),E\left(\mathcal{M}_{un}(x)\right)
        \big \rangle,0
        \right \},
        \nonumber
    \end{equation}
    where $E(\cdot)$ denotes the Sentence-BERT encoder. A higher value indicates better preservation of the original responses.

    \item \textbf{Entailment Score (ES).}
    This measures whether the generated responses convey the information of the ground truth using a natural language inference (NLI) model:
    \begin{equation}
        \mathrm{ES}
        =
        \frac{1}{|\mathcal{D}|}
        \sum_{(x,y)\in\mathcal{D}}
        \mathds{1}
        \left[
        \mathrm{NLI}\left(\mathcal{M}(x),y\right)
        =
        \text{entailment}
        \right],
        \nonumber
    \end{equation}
    where $\mathds{1}$ is the indicator function. 
    A higher ES indicates better factual consistency with the ground-truth answers.
\end{itemize}

The above metrics evaluate model responses from complementary perspectives, covering both expression and semantic properties. 
To provide a more intuitive summary of forgetting and retention, Yuan \etal further aggregate these multidimensional metrics into two measures: 
\textit{Forget Efficacy (FE)} for forgetting quality, and \textit{Model Utility (MU)} for retaining utility.

\mypara{Forget Efficacy (FE)}
This aggregates these metrics (except $\mathrm{TE}$) on $\mathcal{D}_f$ to measure the overall forgetting quality:
\begin{equation}
    \mathrm{FE}
    =
    1-
    \frac{
    \mathrm{R}_{f}+
    \mathrm{P}_{f}+
    \mathrm{TR}_{f}+
    \mathrm{CS}_{f}+
    \mathrm{ES}_{f}
    }{5}
    \nonumber
\end{equation}
A higher FE indicates better forgetting quality.
When FE of an unlearned model exceeds that of the ``Retrain'' model, it is considered to be the complete removal of target knowledge.

\mypara{Model Utility (MU)}
MU aggregates all six metrics on the utility retention sets $\mathbb{D}_u=\{\mathcal{D}_r,\mathcal{D}_{\mathrm{RA}},\mathcal{D}_{\mathrm{WF}}\}$. 
Let $\mathbb{M}_u=\{\mathrm{R},\mathrm{P},\mathrm{TR},\mathrm{TE},\mathrm{CS},\mathrm{ES}\}$, MU is defined as the harmonic mean:
    \begin{equation}
        \mathrm{MU}
        =
        \operatorname{HMean}
        \left(
        \left\{
        q(\mathcal{D})
        \mid
        q\in\mathbb{M}_u,\,
        \mathcal{D}\in\mathbb{D}_u
        \right\}
        \right)
        \nonumber
    \end{equation}
A higher MU indicates better retention of both neighboring and general knowledge, while an MU close to that of the ``Origin'' model suggests that model utility is well preserved after unlearning.

\subsection{Results Analysis}
\autoref{tab:tofu_main} presents the aggregated unlearning performance on TOFU under different forgetting ratios compared with baselines. 
The reported ``Forget'' and ``Retain'' are both \textit{aggregated metrics}. 
We also present the \textit{original detailed metrics} in \autoref{tab:tofu_detailed}.

\mypara{Observation} 
Across all settings, \sysname consistently achieves complete removal of target knowledge while retaining model utility almost without degradation. 
Specifically, its forgetting scores on forget01, forget05, and forget10 outperform both GA and NPO, and exceed those of the ``Retrain'' model. 
Since the ``Retrain'' model is trained without $\mathcal{D}_f$, this result indicates that, when evaluated on queries related to $\mathcal{D}_f$, \sysname behaves as if it had never been exposed to the forget data, \ie, complete removal of target $\mathcal{D}_f$.
Importantly, this improved forgetting performance is not achieved through excessive degradation of model utility. 
\sysname maintains retention scores close to those of the ``Origin'' model, demonstrating that both neighboring knowledge and general factual knowledge are well preserved.

Furthermore, \sysname remains stable as the forgetting ratio increases. 
In contrast, GA exhibits substantial and unstable utility degradation, while NPO and RMU suffer a decline in utility retention. 
Together with the results on MUSE and WMDP, these findings further present the generalizability of \sysname across copyrighted content, hazardous knowledge, and personal information.

\begin{table}[t]
\centering
\caption{
Performance on TOFU under different forgetting settings.
``Forget'' and ``Retain'' are measured by FE and MU, which are aggregated metrics.
We highlight the best performance of forgetting and retention in \textbf{bold}.
}
\label{tab:tofu_main}
\resizebox{\linewidth}{!}{
\begin{tabular}{ccccccc}
\toprule
\multirow{2}{*}{\textbf{Method}}
& \multicolumn{2}{c}{forget01}
& \multicolumn{2}{c}{forget05}
& \multicolumn{2}{c}{forget10} \\
\cmidrule(lr){2-3} \cmidrule(lr){4-5} \cmidrule(lr){6-7}
& Forget$\uparrow$ & Retain$\uparrow$
& Forget$\uparrow$ & Retain$\uparrow$
& Forget$\uparrow$ & Retain$\uparrow$ \\
\midrule
Origin
& 3.09  & 75.81
& 3.19  & 75.85
& 3.19  & 75.85 \\
Retrain
& 63.35 & 74.68
& 65.18 & 74.67
& 65.45 & 73.64 \\
\midrule
GA
& 69.46 & 66.59
& 3.89  & 29.25
& 11.17  & 50.29 \\
NPO
& 71.14 & 64.10
& 69.31 & 59.62
& 73.04 & 56.78 \\
RMU
& 73.91 & 63.24
& 70.72 & 66.42
& 73.63 & 70.91 \\
\midrule
\rowcolor[RGB]{227,234,241}
\textbf{\sysname}
& \textbf{86.76} & \textbf{75.79}
& \textbf{91.27} & \textbf{75.58}
& \textbf{88.28} & \textbf{76.05} \\
\bottomrule
\end{tabular}
}
\end{table}

\section{Detailed Analysis}
\label{sec: detailed analysis}

\subsection{Causal Effect of Various Datasets}
\label{subsec: causal effect of various datasets}
\mypara{Setup}
We visualize the Causal Effect of FFNs, using datasets with different knowledge domains.
1,000 Factual Statements consists of a collection of concise, objective assertions regarding real-world facts.
MUSE-BOOKS contains question-answering pairs derived from the Harry Potter series, while MUSE-NEWS encompasses information collected from BBC News corpora published after August 2023.
The two datasets of MUSE represent the specific knowledge domains.

The QA pairs in MUSE-BOOKS and MUSE-NEWS represent more general, context-based knowledge retrieval scenarios but do not follow a fixed atomic-fact structure. 
Consequently, their questions and answers may contain low-information generic tokens, such as articles and pronouns (\eg, ``The,'' ``A,'' or ``To''), which can interfere with the accuracy of causal tracing. 
To address this, we select perturbation tokens and answer tokens according to the \textit{information} they carry. 
For perturbation tokens, we identify semantically meaningful entities in the question and select those whose replacement changes the model’s answer. 
For answer tokens, we first identify informative entities in the answer (\eg, persons, locations, or organizations), and prefill the low-information prefix preceding the selected entity into the prompt, such that the first token of the entity serves as the target answer token. 
In this way, the first answer token used to compute the causal effect is highly informative, avoiding interference from generic tokens.

\begin{table}[]
\centering
\caption{Detailed metrics on TOFU before aggregated.}
\label{tab:tofu_detailed}
\fontsize{6}{6.5}\selectfont
\setlength{\tabcolsep}{1.6pt}
\renewcommand{\arraystretch}{1.2}

\begin{tabular}{cc|cccccc|cccccc}
\toprule
\textbf{Setting}
& \textbf{Method}
& \textbf{R}
& \textbf{P}
& \textbf{TR}
& \textbf{TE}
& \textbf{CS}
& \textbf{ES}
& \textbf{R}
& \textbf{P}
& \textbf{TR}
& \textbf{TE}
& \textbf{CS}
& \textbf{ES} \\
\midrule
\multicolumn{2}{c|}{\cellcolor[gray]{0.95}}
& \multicolumn{6}{c|}{
    \cellcolor[gray]{0.95}\textbf{Forget Set} ($\downarrow$)}
& \multicolumn{6}{c}{
    \cellcolor[gray]{0.95}\textbf{Retain Set} ($\uparrow$)} \\

\multirow[c]{3}{*}{forget01}
& Base
& 95.22 & 99.30 & 46.59 & --    & 99.05 & 92.50
& 98.10 & 98.95 & 48.28 & 96.95 & 98.84 & 96.66 \\
& Retrain
& 40.55 & 18.50 & 32.26 & --    & 76.90 & 15.00
& 98.38 & 98.88 & 47.21 & 96.96 & 99.14 & 97.66 \\
& \cellcolor[RGB]{227,234,241}\textbf{\sysname}
& \cellcolor[RGB]{227,234,241}13.11
& \cellcolor[RGB]{227,234,241}1.88
& \cellcolor[RGB]{227,234,241}23.80
& \cellcolor[RGB]{227,234,241}--
& \cellcolor[RGB]{227,234,241}22.42
& \cellcolor[RGB]{227,234,241}5.00
& \cellcolor[RGB]{227,234,241}98.23
& \cellcolor[RGB]{227,234,241}98.93
& \cellcolor[RGB]{227,234,241}48.22
& \cellcolor[RGB]{227,234,241}96.93
& \cellcolor[RGB]{227,234,241}98.86
& \cellcolor[RGB]{227,234,241}96.66 \\
\cmidrule{1-14}
\multirow[c]{3}{*}{forget05}
& Base
& 97.25 & 98.92 & 49.35 & --    & 99.00 & 94.00
& 98.10 & 98.95 & 48.28 & 96.95 & 98.84 & 96.66 \\
& Retrain
& 39.94 & 14.90 & 33.10 & --    & 73.60 & 12.50
& 97.38 & 98.97 & 47.33 & 96.94 & 98.47 & 96.00 \\
& \cellcolor[RGB]{227,234,241}\textbf{\sysname}
& \cellcolor[RGB]{227,234,241}8.59
& \cellcolor[RGB]{227,234,241}2.44
& \cellcolor[RGB]{227,234,241}20.34
& \cellcolor[RGB]{227,234,241}--
& \cellcolor[RGB]{227,234,241}10.73
& \cellcolor[RGB]{227,234,241}1.50
& \cellcolor[RGB]{227,234,241}98.59
& \cellcolor[RGB]{227,234,241}98.92
& \cellcolor[RGB]{227,234,241}48.27
& \cellcolor[RGB]{227,234,241}96.92
& \cellcolor[RGB]{227,234,241}99.00
& \cellcolor[RGB]{227,234,241}97.00 \\
\cmidrule{1-14}
\multirow[c]{3}{*}{forget10}
& Base
& 98.77 & 99.01 & 49.26 & --    & 99.58 & 97.00
& 98.10 & 98.95 & 48.28 & 96.95 & 98.84 & 96.66 \\
& Retrain
& 40.61 & 14.75 & 32.78 & --    & 72.24 & 12.33
& 97.64 & 98.96 & 46.76 & 96.95 & 98.76 & 96.33 \\
& \cellcolor[RGB]{227,234,241}\textbf{\sysname}
& \cellcolor[RGB]{227,234,241}9.00
& \cellcolor[RGB]{227,234,241}8.17
& \cellcolor[RGB]{227,234,241}20.96
& \cellcolor[RGB]{227,234,241}--
& \cellcolor[RGB]{227,234,241}14.77
& \cellcolor[RGB]{227,234,241}5.66
& \cellcolor[RGB]{227,234,241}98.12
& \cellcolor[RGB]{227,234,241}98.83
& \cellcolor[RGB]{227,234,241}48.06
& \cellcolor[RGB]{227,234,241}96.92
& \cellcolor[RGB]{227,234,241}98.82
& \cellcolor[RGB]{227,234,241}96.33 \\
\midrule
\multicolumn{2}{c|}{\cellcolor[gray]{0.95}}
& \multicolumn{6}{c|}{
    \cellcolor[gray]{0.95}\textbf{Real Authors} ($\uparrow$)}
& \multicolumn{6}{c}{
    \cellcolor[gray]{0.95}\textbf{World Facts} ($\uparrow$)} \\
\multirow[c]{3}{*}{forget01}
& Base
& 93.29 & 44.71 & 57.90 & 98.60 & 99.83 & 95.00
& 88.32 & 42.58 & 55.93 & 95.92 & 99.26 & 88.03 \\
& Retrain
& 91.79 & 45.57 & 60.64 & 98.48 & 96.55 & 85.00
& 89.31 & 42.42 & 55.90 & 95.73 & 95.76 & 72.64 \\
& \cellcolor[RGB]{227,234,241}\textbf{\sysname}
& \cellcolor[RGB]{227,234,241}93.29
& \cellcolor[RGB]{227,234,241}45.32
& \cellcolor[RGB]{227,234,241}58.90
& \cellcolor[RGB]{227,234,241}98.60
& \cellcolor[RGB]{227,234,241}99.79
& \cellcolor[RGB]{227,234,241}95.00
& \cellcolor[RGB]{227,234,241}87.46
& \cellcolor[RGB]{227,234,241}42.66
& \cellcolor[RGB]{227,234,241}55.94
& \cellcolor[RGB]{227,234,241}95.88
& \cellcolor[RGB]{227,234,241}98.73
& \cellcolor[RGB]{227,234,241}83.76 \\
\cmidrule{1-14}
\multirow[c]{3}{*}{forget05}
& Base
& 93.29 & 44.71 & 57.90 & 98.60 & 99.83 & 95.00
& 88.31 & 42.58 & 55.93 & 95.92 & 99.26 & 88.03 \\
& Retrain
& 92.65 & 46.03 & 60.51 & 98.45 & 96.65 & 87.00
& 88.10 & 42.38 & 56.16 & 95.59 & 95.37 & 71.79 \\
& \cellcolor[RGB]{227,234,241}\textbf{\sysname}
& \cellcolor[RGB]{227,234,241}94.80
& \cellcolor[RGB]{227,234,241}45.17
& \cellcolor[RGB]{227,234,241}58.31
& \cellcolor[RGB]{227,234,241}98.62
& \cellcolor[RGB]{227,234,241}99.04
& \cellcolor[RGB]{227,234,241}93.00
& \cellcolor[RGB]{227,234,241}87.03
& \cellcolor[RGB]{227,234,241}42.83
& \cellcolor[RGB]{227,234,241}56.24
& \cellcolor[RGB]{227,234,241}96.12
& \cellcolor[RGB]{227,234,241}98.23
& \cellcolor[RGB]{227,234,241}80.34 \\
\cmidrule{1-14}
\multirow[c]{3}{*}{forget10}
& Base
& 93.29 & 44.71 & 57.90 & 98.60 & 99.83 & 95.00
& 88.32 & 42.58 & 55.93 & 95.92 & 99.26 & 88.03 \\
& Retrain
& 91.55 & 44.58 & 58.18 & 98.54 & 96.67 & 86.00
& 90.17 & 41.11 & 53.35 & 95.75 & 94.97 & 70.94 \\
& \cellcolor[RGB]{227,234,241}\textbf{\sysname}
& \cellcolor[RGB]{227,234,241}94.30
& \cellcolor[RGB]{227,234,241}46.44
& \cellcolor[RGB]{227,234,241}59.95
& \cellcolor[RGB]{227,234,241}98.64
& \cellcolor[RGB]{227,234,241}98.28
& \cellcolor[RGB]{227,234,241}93.00
& \cellcolor[RGB]{227,234,241}87.89
& \cellcolor[RGB]{227,234,241}43.56
& \cellcolor[RGB]{227,234,241}57.33
& \cellcolor[RGB]{227,234,241}96.14
& \cellcolor[RGB]{227,234,241}97.54
& \cellcolor[RGB]{227,234,241}80.34 \\
\bottomrule
\end{tabular}
\end{table}

\mypara{Observations}
\autoref{fig: causal trace compare} presents the visualization results.
Despite the differences in semantic content across these datasets, the distributions of $\text{CE}_{FFN}$ exhibit a remarkable degree of consistency. 
The FFNs with significant causal effect are predominantly locate within the early layers of the model. 
This distributional consistency demonstrates that diverse categories of factual knowledge follow similar storage patterns within the internal architecture of LLMs.
Consequently, the knowledge storage patterns derived from the 1,000 Factual Statements, which encompass a broad range of real-world factual knowledge, present cross-domain universality and can be robustly generalized to other knowledge domains. 
This empirical result supports the adoption of 1,000 Factual Statements for layer selection across diverse unlearning tasks. 
By doing so, we effectively circumvent the prohibitive computational overhead associated with per-task causal tracing, significantly enhancing the scalability and practicality of our unlearning framework.
This empirical result validates our design of reusing the FFNs, which are identified via the 1,000 Factual Statements, for diverse unlearning tasks, thereby circumventing the prohibitive computational overhead caused by implementing causal tracing for each unlearning dataset.

\begin{figure}[!t]
    \centering
    \includegraphics[width=0.6\hsize]{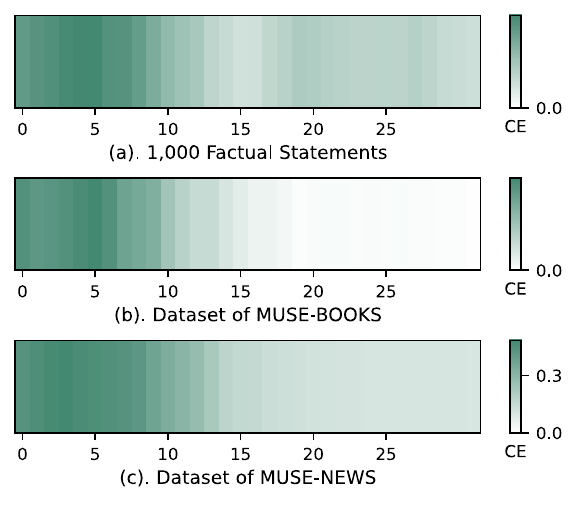}
    \caption{
    Visualization of Causal Effect of FFN across different datasets.
    The darker colors indicate stronger results.
    }
    \label{fig: causal trace compare}
\end{figure}

\subsection{Unlearning Layers Analysis}
\label{subsec: unlearning layers analysis}

\mypara{Setup}
We evaluate unlearning performance across various unlearning layers to investigate how the selection of target layers influences the overall results.
\autoref{fig:CE_zephyr} presents the $\text{CE}_{FFN}$ and $\text{CE}_{Attn}$ scores, and \autoref{fig:layer selection} shows the unlearning performance with different unlearning layers.

\mypara{Advantage of FFN Dominance}
\autoref{fig:layer selection} demonstrates that unlearning performance is highly sensitive to the functional properties of the targeted layers. 
Unlearning layers with high $\text{CE}_{FFN}$, as identified in \autoref{fig:CE_zephyr}, significantly outperform layers with high $\text{CE}_{Attn}$ in balancing forgetting and utility.
This indicates that unlearning on layers responsible for knowledge storage enables precise removal of target knowledge.

\mypara{Comparison with Naive Selection}
\autoref{fig:layer selection} also reveals the limitations of a naive strategy that directly selects layers with the highest $\text{CE}_{FFN}$ scores. 
In our experiments, these peak values coincidentally occur in the shallow layers ($l_3 \sim l_6$), as presented in \autoref{fig:CE_zephyr}. 
However, updating these layers yields poor unlearning performance, failing to remove knowledge while severely compromising the model utility. 
This may stem from that the first several layers handle generic, low-level features shared across tasks~\cite{jawahar2019bertlearn}.
Updating these lacks the semantic specificity required for targeted knowledge removal, thus rendering them unsuitable as unlearning layers.

\mypara{The Limits of Depth}
We have discussed the limitations of shallow layers with high $\text{CE}_{FFN}$ scores.
However, deeper layers do not indicate better unlearning performance. 
While deeper layers theoretically aggregate more knowledge due to residual connections, our findings reveal a critical boundary.
There exist the transition layers ($l_{10} \sim l_{15}$) between the FFN-dominant and the MHSA-dominant zone.
These layers might also exhibit high $CE_{FFN}$, especially those located at the terminus of the FFN-dominant zone, for example, $l_{10} \sim l_{12}$.
However, updating these layers could cause severe utility degradation. 
This potentially arises from the disruption of the functional transition between knowledge storage and semantics aggregation.

\mypara{Consecutive Layers}
The right plot of \autoref{fig:layer selection} also reveals a slight advantage of consecutive layers.
While some discontinuous selections may achieve slightly higher forgetting, they often suffer from a decline in model utility. 
A selection of continuous layers ensures a unified modification of the targeted knowledge representation, whereas skipping intermediate layers may disrupt the structural coherence of the representation space, leading to functional instability.
Therefore, consecutive unlearning layers could simplify the complexity of the situation where layers with high $\text{CE}_{FFN}$ are discrete.

\begin{figure}[!t]
    \centering
    \includegraphics[width=1.0\hsize]{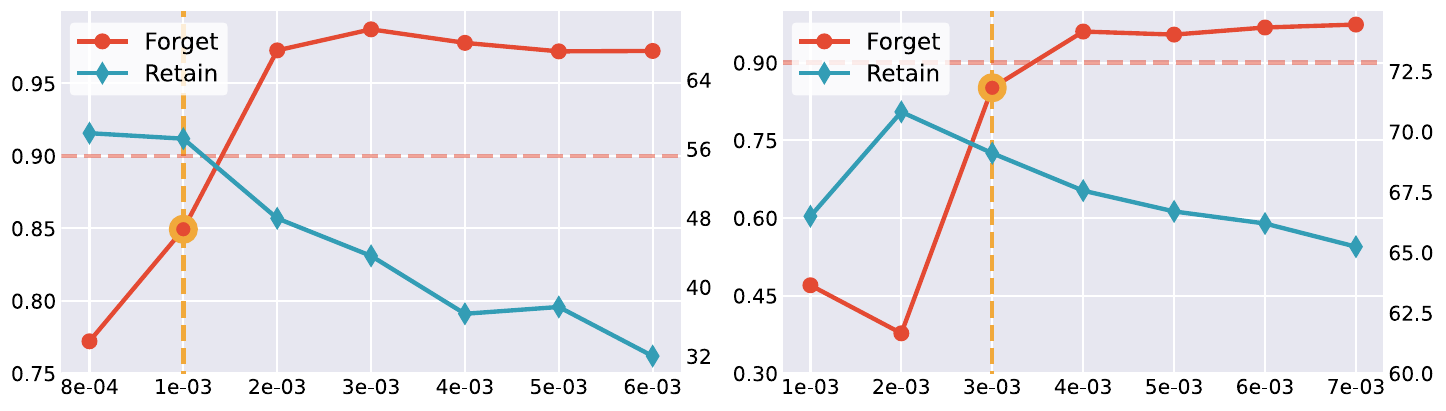}
    \caption{
    Unlearning with fixed $\beta=0.1$ and different $\tau$.
    The left plot employs Zephyr-7B-Beta, while the right plot corresponds to Qwen-3-8B.
    The highlighted value of $\tau$ is the stability boundary. 
    The left y-axis represents KRD, while the right y-axis denotes the MMLU score. 
    A reference line at $\text{KRD} = 0.9$ is plotted. 
    The ``stability boundary'' is defined as the specific $\tau$ where KRD begins to fall lower than the reference.
    }
    \label{fig: tau boundary}
\end{figure}

\subsection{Stability Boundary}
\label{subsec: stability boundary}
\mypara{Setup}
We conduct an initial sensitivity analysis on the hyperparameter $\tau$. 
In this setup, we fix $\beta=0.1$, representing the most aggressive forgetting in our experiments.
Under this fixed constraint, we progressively decrease the value of $\tau$.
Particularly, we define the ``stability boundary'' as the specific value of $\tau$ at which KRD first falls below 0.9.

\mypara{Observation}
\autoref{fig: tau boundary} illustrates the impact of different values of $\tau$ on unlearning performance with a fixed $\beta = 0.1$. 
We observe a significant transition in forgetting at a specific $\tau$, which we identify as the \textit{stability boundary}.
At or below this point, \sysname fails to remove the target knowledge thoroughly. 
This phenomenon arises from the entanglement of knowledge representations, which causes the relaxed null-space projection to exert a global preservation effect across multiple domains.
Beyond merely preserving the retaining knowledge, it excessively constrains perturbations across the entire representation space. 
Consequently, the deviations introduced for knowledge removal are also eliminated.
In practice, we adopt a value of $\tau$ slightly larger than the stability boundary, such as $2 \times 10^{-3}$ for Zephyr-7B-Beta and $4 \times 10^{-3}$ for Qwen-3, and then gradually increase $\beta$ across $\{ 0.1, 1.0, 2.0\}$ to achieve the unlearning balance.

\subsection{FFN-Centered Update Design}
\label{subsec:ffn_mhsa}
\begin{table}[t]
\centering
\caption{Ablation of the FFN-centered update design.
}
\label{tab:ffn_mhsa}
\scriptsize
\renewcommand{\arraystretch}{1.0}
\setlength{\tabcolsep}{1.0pt}

\begin{minipage}[t]{\linewidth}
\vspace{0pt}
\centering
\textbf{(a) MUSE}\\[1pt]

\begin{tabular*}{\linewidth}{
@{\extracolsep{\fill}}
lccccc
@{}
}
\toprule
\multirow{2}{*}{}
& \multicolumn{4}{c}{\textbf{Forgetting Quality}}
& \multicolumn{1}{c}{\textbf{Retaining Utility}} \\
\cmidrule(lr){2-5}
\cmidrule(l){6-6}
&
\makecell{VerbMem\\$\mathcal{D}_f\downarrow$}
&
\makecell{KnowMem\\$\mathcal{D}_f\downarrow$}
&
\makecell{PrivLeak\\$\rightarrow 0$}
&
\makecell{KRD\\$\rightarrow 1$}
&
\makecell{KnowMem\\$\mathcal{D}_r\uparrow$} \\
\midrule

\multicolumn{6}{c}{\textbf{BOOKS}} \\
\cmidrule(){1-6}
Origin
& 99.59 & 58.76 & 56.68 & 0.0000 & 67.00 \\
Retrain
& 14.35 & 28.90 & 0.00 & 1.0000 & 74.38 \\
FFN (KUDA)
& 0.05 & 26.73 & 16.55 & 0.8792 & 61.97 \\
FFN+MHSA
& 15.67 & 28.32 & 57.26 & 0.0003 & 51.61 \\
\midrule

\multicolumn{6}{c}{\textbf{NEWS}} \\
\cmidrule(){1-6}
Origin
& 58.42 & 63.41 & 99.84 & 0.0000 & 54.96 \\
Retrain
& 20.80 & 33.30 & 0.00 & 1.0000 & 54.68 \\
FFN (KUDA)
& 8.71 & 0.26 & 20.09 & 0.9226 & 53.08 \\
FFN+MHSA
& 21.27 & 41.05 & 97.56 & 0.0653 & 34.19 \\
\bottomrule
\end{tabular*}
\end{minipage}

\par\vspace{3pt}

\begin{minipage}[t]{\linewidth}
\vspace{0pt}
\centering
\textbf{(b) WMDP}\\[1pt]

\begin{tabular*}{\linewidth}{
@{\extracolsep{\fill}}
lccccc
@{}
}
\toprule
\multirow{2}{*}{Update Scope}
& \multicolumn{4}{c}{\textbf{Forgetting Quality}}
& \multicolumn{1}{c}{\textbf{Retaining Utility}} \\
\cmidrule(lr){2-5}
\cmidrule(l){6-6}
&
\makecell{Acc-Bio\\$\downarrow$}
&
\makecell{Acc-Cyber\\$\downarrow$}
&
\makecell{Acc-Avg\\$\downarrow$}
&
\makecell{KRD\\$\rightarrow 1$}
&
\makecell{MMLU\\$\uparrow$} \\
\midrule

\multicolumn{6}{c}{\textbf{Zephyr-7B-Beta}} \\
\cmidrule(){1-6}
Origin
& 63.82 & 43.78 & 51.59 & 0.0000 & 58.13 \\
FFN (KUDA)
& 29.92 & 26.82 & 28.03 & 0.8879 & 57.14 \\
FFN+MHSA
& 41.40 & 27.93 & 33.20 & 0.6858 & 33.14 \\
\midrule

\multicolumn{6}{c}{\textbf{Llama-3.1-8B}} \\
\cmidrule(){1-6}
Origin
& 59.62 & 41.84 & 48.80 & 0.0000 & 63.31 \\
FFN (KUDA)
& 29.22 & 25.00 & 26.28 & 0.9351 & 60.64 \\
FFN+MHSA
& 28.33 & 25.00 & 25.92 & 0.9495 & 52.03 \\
\midrule

\multicolumn{6}{c}{\textbf{Qwen3-8B}} \\
\cmidrule(){1-6}
Origin
& 74.16 & 45.14 & 56.47 & 0.0000 & 73.01 \\
FFN (KUDA)
& 28.43 & 26.17 & 27.05 & 0.9360 & 68.92 \\
FFN+MHSA
& 27.47 & 25.86 & 26.53 & 0.9535 & 45.76 \\
\bottomrule
\end{tabular*}
\end{minipage}
\end{table}
To further validate the FFN-centered design of \sysname, we extend the update scope to MHSA and examine whether it yields a better forgetting-retention trade-off. 

\mypara{Setup}
We compare the original \sysname configuration, which updates $W_{out}$ in the FFNs of the selected unlearning layers, with an extended variant that additionally updates $W_V$ in their MHSA modules. 
The knowledge-related contribution of MHSA arises from its ability to extract and aggregate information from the context. 
Specifically, $W_Q$ and $W_K$ determine how attention is allocated over the contextual tokens, whereas $W_V$ determines the information extracted from and aggregated across these tokens. 
We therefore update $W_V$ to intervene in the knowledge-related information aggregated by MHSA, while keeping the mechanism of attention allocation intact. 
Except for the expanded update parameters, all other components and experimental settings remain consistent.

\mypara{Observation}
As presented in ~\autoref{tab:ffn_mhsa}, extending the update scope to MHSA does not yield the expected improvement in knowledge removal, but instead substantially degrades retaining utility.
Specifically, on MUSE, jointly updating FFNs and MHSA degrades both forgetting quality and retaining utility on both BOOKS and NEWS.
A possible explanation is that the knowledge-related contribution of MHSA primarily arises from aggregating the question context, rather than from directly encoding the target knowledge associations. 
Once $W_V$ is updated, the representation-deviation objective may exploit a shortcut by altering the aggregated contextual information, thereby reducing the intervention to the target associations encoded in FFNs. 
Consequently, the target knowledge may remain accessible under specific contexts, while the resulting behavioral changes fail to eliminate the statistical traces from the training data, as supported by the PrivLeak scores remaining close to those of the ``Origin'' model.
Moreover, because contextual aggregation is broadly involved in general language modeling, modifying MHSA  substantially degrades model utility.
On WMDP, both forgetting quality and retaining utility decrease substantially for Zephyr-7B-Beta. 
For Llama-3.1-8B and Qwen3-8B, the slight forgetting improvements are accompanied by substantial utility degradation, suggesting that these marginal gains may arise primarily from degraded contextual understanding rather than targeted knowledge removal.
This further indicates that modifying MHSA influences contextual processing without yielding unlearning benefits.

\mypara{Disccusion}
Although MHSA may also contribute to knowledge associations, such contributions arise from the contextual mechanisms that are fundamental to general language modeling~\cite{geva2021transformer,meng2022locating}. 
Consequently, updating MHSA risks causing broader degradation in general utility. 
Our experiments with jointly updating FFN and MHSA show that expanding the updates to MHSA yields no consistent forgetting gains, while degrading the utility. 
This supports our FFN-centered design.

\subsection{Efficiency Comparison}
\label{subsec: efficiency comparison}
To further demonstrate that \sysname incurs limited additional computational overhead, we compare its computational cost with representative baselines under the same settings.

\mypara{Setup}
We measure the runtime cost of GA, NPO, SimNPO, RMU, and \sysname under the same hardware and training configuration, using two NVIDIA L40 GPUs, the same batch size of 4, and 10 unlearning epochs. 
For each setting, KUDA and RMU update the same four layers.

\mypara{Computational Cost Analysis}
We consider two categories of representative baselines: full parameter and partial parameter update methods. 
For full parameter update methods, we consider GA, SimNPO, and NPO. 
All three methods compute their losses based on the model outputs and then update all model parameters.
Their computational costs are therefore dominated by full-model forward and backward propagation.
GA and SimNPO depend only on the current model and thus have similar workflows, resulting in comparable costs. 
In contrast, NPO additionally requires a frozen reference model to perform an extra forward pass on $\mathcal{D}_f$, leading to higher computational overhead. 
Overall, their runtime costs follow the trend $\mathrm{GA}\approx\mathrm{SimNPO}<\mathrm{NPO}$.

For partial parameter update methods, RMU and \sysname update the same four layers.
Accordingly, their computational difference mainly arises from method-specific components.
The additional computations of \sysname mainly include causal tracing, input feature capture, SVD, and gradient projection. 
Causal tracing constitutes the primary one-time cost and is used to select the layers to be updated. 
It only needs to be performed once for \textit{each model}, and the resulting layer selection can be reused across different datasets. 
Input feature capture and SVD are performed only once for \textit{each dataset}, and the results are reused throughout the subsequent gradient updates. 
Therefore, during iterative optimization, the main recurring overhead comes from gradient projection at each update step.

\begin{table}[t]
    \centering
    \caption{
    Total runtime for 10 unlearning epochs across different datasets and models.
    ``\sysname w/ Causal'' includes the one-time causal-tracing cost,
    whereas ``\sysname w/o Causal'' reuses the identified layers and excludes this cost.
    Runtime is reported in HH:MM:SS.
    }
    \label{tab:efficiency_full}
    \setlength{\tabcolsep}{4pt}
    \renewcommand{\arraystretch}{1.0}
    \begin{tabular}{lccccc}
        \toprule
        \multirow{2}{*}{\textbf{Method}}
        & \multicolumn{3}{c}{\textbf{WMDP}}
        & \multicolumn{2}{c}{\textbf{MUSE}} \\
        \cmidrule(lr){2-4}
        \cmidrule(lr){5-6}
        & \makecell{\textbf{Zephyr-Beta}}
        & \makecell{\textbf{Llama-3.1}}
        & \makecell{\textbf{Qwen-3}}
        & \makecell{\textbf{BOOKS}}
        & \makecell{\textbf{NEWS}} \\
        \midrule
        GA
        & 04:08:05
        & 04:31:58
        & 04:47:23
        & 05:32:00
        & 03:55:39 \\

        SimNPO
        & 04:06:17
        & 04:35:24
        & 04:48:46
        & 05:35:19
        & 03:51:08\\

        NPO
        & 04:30:36
        & 04:57:22
        & 05:26:34
        & 06:45:02
        & 04:27:25\\

        RMU
        & 00:54:43
        & 01:16:14
        & 01:15:26
        & 01:53:07
        & 01:24:15 \\

        \midrule

        \makecell[l]{\sysname w/ Causal}
        & 03:22:20
        & 04:34:29
        & 04:25:57
        & 04:29:51
        & 03:28:41 \\

        \makecell[l]{\sysname w/o Causal}
        & 01:11:57
        & 01:37:29
        & 01:44:12
        & 02:08:45
        & 01:38:17 \\
        \bottomrule
    \end{tabular}
\end{table}

\mypara{Observation}
Overall, as shown in ~\autoref{tab:efficiency_full}, \sysname achieves a favorable trade-off between efficiency and effectiveness. 
Concretely, when the layer-selection results are reused (\ie, w/o causal tracing), \sysname requires substantially less computational cost than GA, the lower-cost option in full parameter baselines, across all experimental settings. 
Even when including the one-time cost of causal tracing, \sysname's total runtime remains comparable to or lower than that of GA, demonstrating that the additional computations do not offset its computational advantage over full parameter update methods. 
Compared with RMU, when reusing the causal-tracing results, \sysname incurs only limited additional overhead in exchange for better unlearning performance, while the two methods remain within a similar runtime scale.

\section{Adversarial Attacks}
\label{sec: Adversarial Attacks}

\subsection{Min-K$\%$ Prob MIA}
One of the ideal goals in LLM unlearning is to make the forget set $\mathcal{D}_f$ statistically indistinguishable from unseen data $\mathcal{D}_{holdout}$. 
MIA directly quantifies the indistinguishability by exploiting distributional differences in certain statistics (\eg, loss) between member and non-member data.
To evaluate whether the unlearned model retains residual membership information, we employ Min-K$\%$ Prob MIA~\cite{shi2024detecting}, which is designed for language models based on the loss.
It is grounded in the hypothesis that unseen (non-member) examples are more likely to contain some outlier tokens with extremely low conditional probabilities, whereas seen (member) examples tend to exhibit more uniform token-level likelihoods. 

\mypara{Implement and Metrics}
Formally, given an input sequence $x$, the method computes the conditional log-probability for each token, selects the K$\%$ tokens with the minimum probabilities to construct the Min-K$\%$ set for $x$.
Then, it calculates their average log-likelihood as the discrimination score. 

Certain methods of LLM unlearning, such as GA, achieve forgetting by intentionally inflating the loss, potentially leading to abnormally high discrimination scores. 
Therefore, we transform the discrimination score into an AUC score. 
Subsequently, following the definition of \textit{PrivLeak} in Appendix~\ref{subsec: Metrics}, we normalize AUC score by comparing it with that of the ``retrain'' model to generate \textit{Relative AUC}, which is the metric in \autoref{fig:Attack}a.
As its value approaches zero, the model performs similarly with the ``retrain'' baseline regarding membership discriminability.
This indicates that the adversaries gain no more information than the model retrained from scratch.

\subsection{FOCUSONKEY}
FOCUSONKEY~\cite{zhang2025focusonkey} is a prompt-based knowledge recovery attack designed to verify whether the forgotten knowledge is truly removed from model parameters. 
The core insight is that since existing unlearning methods primarily work by disrupting the model's attention to keywords in the prompt, explicitly emphasizing these keywords can recover supposedly forgotten information~\cite{zhang2025focusonkey}. 
Specifically, FOCUSONKEY first identifies the set of key tokens $K_{pre}$ that the model, before unlearning, focuses on when correctly answering questions. 
It then appends a minimal emphasis phrase to the original prompt (\eg, ``Focus on $[K_s]$ to answer'', where $K_s  \subseteq K_{pre}$) to highlight the relevant key tokens. 
By exploiting such prompts, FOCUSONKEY can recover knowledge under a black-box setting without access to model parameters, relying solely on greedy decoding. 

\mypara{Implement and Metrics}
We implement FOCUSONKEY following Zhang \etal~\cite{zhang2025focusonkey}. 
We employ the Prometheus-7B-v2.0~\cite{kim2024Prometheus} to construct an LLM-as-a-judge framework to automatically assess whether the target model correctly answers relevant questions, thereby computing the answer accuracy (Accuracy). 
In \autoref{fig:Attack}b, we adopt \textit{Accuracy} as the metric to quantify the degree of knowledge recovery.

\begin{table}[t]
    \centering
    \caption{
    Robustness of KUDA against quantization-based knowledge recovery on MUSE and WMDP.}
    \label{tab:quantization}
    \scriptsize
    \renewcommand{\arraystretch}{1.0}

    \textbf{(a) MUSE}\\[2pt]
    \setlength{\tabcolsep}{5.0pt}
    \begin{tabular}{lcccc}
        \toprule
        \multirow{2}{*}{\textbf{Unlearning Method}}
        & \multicolumn{3}{c}{\textbf{Forgetting Quality}}
        & \multicolumn{1}{c}{\textbf{Retaining Utility}} \\
        \cmidrule(lr){2-4}
        \cmidrule(lr){5-5}
        & \shortstack{VerbMem\\$\mathcal{D}_f \downarrow$}
        & \shortstack{KnowMem\\$\mathcal{D}_f \downarrow$}
        & \shortstack{PrivLeak\\$\rightarrow 0$}
        & \shortstack{KnowMem\\$\mathcal{D}_r \uparrow$} \\
        \midrule

        \multicolumn{5}{c}{\textbf{MUSE-BOOKS}} \\
        \cmidrule(){1-5}
        Origin       & 99.59 & 58.76 & 56.68 & 67.00 \\
        Retrain      & 14.35 & 28.90 &  0.00 & 74.38 \\
        \cmidrule(l){2-5}
        KUDA (BF16)  &  0.05 & 26.73 & 16.55 & 61.97 \\
        KUDA (8-bit) &  0.08 & 29.34 & 15.37 & 64.59 \\
        KUDA (4-bit) &  1.56 & 16.73 & 28.71 & 47.96 \\

        \midrule
        \multicolumn{5}{c}{\textbf{MUSE-NEWS}} \\
        \cmidrule(){1-5}
        Origin       & 58.42 & 63.41 & 99.84 & 54.96 \\
        Retrain      & 20.80 & 33.30 &  0.00 & 54.68 \\
        \cmidrule(l){2-5}
        KUDA (BF16)  &  8.71 &  0.26 & 20.09 & 53.08 \\
        KUDA (8-bit) &  9.28 &  0.55 & 14.82 & 53.22 \\
        KUDA (4-bit) &  7.90 &  0.39 & 18.76 & 49.13 \\
        \bottomrule
    \end{tabular}

    \vspace{2pt}

    \textbf{(b) WMDP}\\[2pt]
    \setlength{\tabcolsep}{4.0pt}
    \begin{tabular}{lcccc}
        \toprule
        \multirow{2}{*}{\textbf{Unlearning Method}}
        & \multicolumn{3}{c}{\textbf{Forgetting Quality}}
        & \multicolumn{1}{c}{\textbf{Retaining Utility}} \\
        \cmidrule(lr){2-4}
        \cmidrule(lr){5-5}
        & Acc-Bio$\downarrow$
        & Acc-Cyber$\downarrow$
        & Acc-Avg$\downarrow$
        & MMLU$\uparrow$ \\
        \midrule

        \multicolumn{5}{c}{\textbf{Zephyr-7B-Beta}} \\
        \cmidrule(){1-5}
        Origin       & 63.82 & 43.78 & 51.59 & 58.13 \\
        \cmidrule(l){2-5}
        KUDA (BF16)  & 29.92 & 26.82 & 28.03 & 57.14 \\
        KUDA (8-bit) & 29.93 & 26.32 & 27.73 & 57.06 \\
        KUDA (4-bit) & 29.30 & 26.37 & 27.51 & 54.83 \\

        \midrule
        \multicolumn{5}{c}{\textbf{Llama-3.1-8B}} \\
        \cmidrule(){1-5}
        Origin       & 59.62 & 41.84 & 48.80 & 63.31 \\
        \cmidrule(l){2-5}
        KUDA (BF16)  & 29.22 & 25.13 & 26.28 & 60.64 \\
        KUDA (8-bit) & 29.37 & 25.40 & 26.59 & 59.56 \\
        KUDA (4-bit) & 25.53 & 25.00 & 25.16 & 52.94 \\

        \midrule
        \multicolumn{5}{c}{\textbf{Qwen3-8B}} \\
        \cmidrule(){1-5}
        Origin       & 74.16 & 45.14 & 56.47 & 73.01 \\
        \cmidrule(l){2-5}
        KUDA (BF16)  & 28.43 & 26.17 & 27.05 & 68.92 \\
        KUDA (8-bit) & 27.57 & 26.87 & 27.14 & 69.13 \\
        KUDA (4-bit) & 26.70 & 25.00 & 25.60 & 65.12 \\
        \bottomrule
    \end{tabular}
\end{table}

\subsection{Quantization-based Knowledge Recovery}
\label{subsec: quantization}
Zhang \etal~\cite{zhang2025quantization} show that post-unlearning quantization on the models may eliminate the parameter changes introduced by unlearning and consequently recover the supposedly removed knowledge. 
We evaluate whether the unlearning results of \sysname remains effective after quantization.

\mypara{Implement and Metrics}
We perform KUDA-unlearning in BF16 and quantize the unlearned models to 4-bit and 8-bit precision following the protocol of Zhang \etal~\cite{zhang2025quantization}. 
No additional training or model updates are performed after quantization. 
We evaluate the quantized models using the same forgetting and retention metrics as in the main experiments.

\mypara{Observation}
The results in \autoref{tab:quantization} indicate that \sysname effectively resists knowledge recovery under both 8-bit and 4-bit quantization across different models and datasets. 
On MUSE, the forgetting metrics of the quantized models remain generally close to those of their BF16 counterparts. 
Although 4-bit quantization introduces fluctuations in several individual metrics, it does not lead to consistent knowledge recovery across metrics. 
On WMDP, the ability to utilize hazardous knowledge of the quantized models remains comparable to that of the BF16 models, indicating that \sysname preserves its forgetting effectiveness. 
Notably, 4-bit quantization causes a certain degree of utility degradation. 
This observation aligns with that of Zhang \etal~\cite{zhang2025quantization} and may reflect the general performance degradation incurred by low-bit quantization, which is not accompanied by knowledge recovery.

\end{document}

%% file: mydefs.tex
\usepackage{algorithm}
\usepackage{algpseudocode}
\usepackage{url}
\usepackage{amsmath}
\usepackage{amssymb}
\usepackage{xspace}
\usepackage{multirow,multicol}
\usepackage[table]{xcolor}
\usepackage{mathtools}
\usepackage{color}
\let\labelindent\relax
\usepackage{enumitem}
\usepackage{hyperref} 
\usepackage{dsfont}
\usepackage{amsthm}

\usepackage{microtype}

\usepackage{tabularx}
\usepackage{booktabs}
\usepackage[normalem]{ulem}
\useunder{\uline}{\ul}{}
\usepackage{makecell}

\usepackage{caption,subcaption}
\setlist[itemize]{leftmargin=*}

\usepackage{xcolor}
\hypersetup{
  colorlinks,
  linkcolor={blue!70!green},
  citecolor={green!70!blue},
  urlcolor={orange!70!red}
}

\newcommand{\etal}{\textit{et al.}\xspace}
\newcommand{\ie}{\textit{i.e.}\xspace}
\newcommand{\eg}{\textit{e.g.}\xspace}

\newcommand{\mypara}[1]{\smallskip\noindent\textbf{#1.}\xspace}

\newtheorem{definition}{Definition}
\newtheorem{theorem}{Theorem}

\usepackage{etoolbox}

\newcommand{\sysname}{\textsc{{KUDA}}\xspace}

\usepackage{etoolbox}
\makeatletter

\newtoggle{inappendix}
\togglefalse{inappendix}

\apptocmd{\appendix}{\toggletrue{inappendix}}{}{\errmessage{failed to patch}}